\newcommand{\alert}[1]{\textcolor{black}{#1}}
\newcommand{\shortOnly}[1]{\ifthenelse{\boolean{short}}{#1}{}}
\newcommand{\onlyShort}[1]{\ifthenelse{\boolean{short}}{#1}{}}
\newcommand{\longOnly}[1]{\ifthenelse{\boolean{short}}{}{#1}}
\newcommand{\onlyLong}[1]{\ifthenelse{\boolean{short}}{}{#1}}
\newcommand{\shortLong}[2]{\ifthenelse{\boolean{short}}{#2}{#1}}
\newcommand{\longShort}[2]{\ifthenelse{\boolean{short}}{#2}{#1}} 
\DeclareMathOperator*{\argmax}{arg\,max}
\theoremstyle{plain}
\newtheorem{theorem}{Theorem}
\newtheorem{proposition}{Proposition}
\newtheorem{corollary}{Corollary}
\newtheorem{lemma}{Lemma}
\newtheorem{claim}{Claim}
\newtheorem{observation}{Observation}
\DeclareRobustCommand*\cal{\@fontswitch\relax\mathcal}
\def\polylog{\operatorname{polylog}}
\def\cA{\mathcal{A}}
\def\cP{\mathcal{P}}
\def\cG{\mathcal{G}}
\def\bp{\mathbf{p}}
\let\originalleft\left
\let\originalright\right
\renewcommand{\left}{\mathopen{}\mathclose\bgroup\originalleft}
\renewcommand{\right}{\aftergroup\egroup\originalright}
\newcommand{\Prob}[1]{\text{Pr}\left[#1\right]\xspace}
\newcommand{\expect}[1]{\mathbb{E}\left[#1\right]\xspace}
\newcommand{\entropy}[1]{\mathbb{H}\left[#1\right]\xspace}
\newcommand{\information}[1]{\mathbb{I}\left[#1\right]\xspace}
\newcommand{\tokens}{\textsf{tokens}}
\newcommand{\infcost}{\mathrm{{IC}}\xspace}
\newcommand{\prt}{\mathcal{GP}\xspace}
\newcommand{\rand}{\mathcal{RS}\xspace}
\newcommand{\good}{\textsl{Good}\xspace}
\newcommand{\bad}{\textsl{Bad}\xspace}
\newcommand{\ra}{\rightarrow\xspace}
\newcommand{\bal}{\textsl{Bal}\xspace}
\newcommand{\out}{\textsl{Out}\xspace}
\newcommand{\poly}{\operatorname{poly}}
\newcommand{\pagerank}{{\sf PageRank}\xspace}
\newcommand{\congest}{\textsc{CONGEST}\xspace}
\renewcommand{\paragraph}[1]{\medskip\noindent{\bf #1.}\xspace}
\renewcommand{\le}{\leqslant}
\renewcommand{\ge}{\geqslant}
\renewcommand{\geq}{\geqslant}
\renewcommand{\leq}{\leqslant}
\title{On the Distributed Complexity of Large-Scale Graph Computations}
\author{
Gopal Pandurangan\thanks{Department of Computer Science, University of Houston, Houston, TX 77204, USA.
\hbox{E-mail}:~{\tt gopalpandurangan@gmail.com}.
Supported, in part, by NSF grants CCF-1527867, CCF-1540512, IIS-1633720, CCF-1717075, and by  BSF grants 2008348 and 2016419.}
\and
Peter Robinson\thanks{Department of Computing and Software, McMaster University, Hamilton, Canada L8S 4L7.
\hbox{E-mail}:~{\tt peter.robinson@mcmaster.ca}.}
\and Michele Scquizzato\thanks{ School of Electrical Engineering and Computer Science, KTH Royal Institute of Technology, Sweden.
\hbox{E-mail}:~{\tt mscq@kth.se}. Supported, in part, by the European Research Council (ERC) under the European Union's
Horizon 2020 research and innovation programme under grant agreement No 715672.}}
\date{}
\begin{document}

\maketitle
\thispagestyle{empty}

\begin{abstract}
Motivated by the increasing need to understand the distributed algorithmic foundations of large-scale graph
computations, we study some fundamental graph problems in a message-passing model for distributed
computing where $k \geq 2$ machines jointly perform computations on graphs with $n$ nodes (typically, $n \gg k$).
The input graph is assumed to be initially randomly partitioned among the $k$ machines, a common
implementation in many real-world systems. Communication is point-to-point, and the goal is to minimize
the number of communication {\em rounds} of the computation.

Our main contribution is the {\em General Lower Bound Theorem}, a  theorem that can be used to show non-trivial lower bounds on the round complexity of distributed large-scale data computations. The General Lower Bound Theorem is established via an information-theoretic approach that relates the round complexity to the minimal amount of information required by machines to solve the problem. Our approach is generic and  this theorem can be used in a ``cookbook" fashion to show distributed lower bounds in the context of several problems, including non-graph problems. We present two applications by showing
(almost) tight lower bounds for the round complexity of two fundamental graph problems, namely {\em PageRank computation} and {\em triangle enumeration}. Our approach, as demonstrated in the case of PageRank, can yield tight lower bounds for problems (including, and especially, under a stochastic partition of the input) where communication complexity techniques are not obvious.
 Our approach, as demonstrated in the case of triangle enumeration, can yield stronger round lower bounds as well as message-round tradeoffs compared to approaches that use communication complexity techniques.
 
We then present distributed algorithms for PageRank and triangle enumeration with a round complexity that (almost) matches the respective lower bounds; these algorithms exhibit a round complexity which scales superlinearly in $k$, improving significantly over previous results for these problems [Klauck et al., SODA 2015].
Specifically, we show the following results:
\begin{itemize}
\item {\em PageRank:} We show a lower bound of $\tilde{\Omega}(n/k^2)$ rounds, and present a distributed
algorithm that computes an approximation of the PageRank of all the nodes of a graph in $\tilde{O}(n/k^2)$ rounds.
\item {\em Triangle enumeration:} We show that there exist graphs with $m$ edges where any distributed algorithm
requires  $\tilde{\Omega}(m/k^{5/3})$ rounds.
This result also implies the first non-trivial lower bound of $\tilde\Omega(n^{1/3})$ rounds for the
{\em congested clique} model, which is tight up to logarithmic factors.
We then present a distributed algorithm that enumerates all the triangles of a graph in $\tilde{O}(m/k^{5/3} + n/k^{4/3})$ rounds. 
\end{itemize}
\end{abstract}
\maketitle

\onlyShort{\vspace{-0.1in}}
\section{Introduction}
\onlyShort{\vspace{-0.05in}}
The focus of this paper is on the distributed processing of large-scale data, in particular,  \emph{graph data},
which is becoming increasingly important with the rise of massive graphs such as the Web graph, social networks, biological networks,
and other graph-structured data and the consequent need for fast distributed algorithms to 
process such graphs. Several large-scale graph processing systems such as Pregel~\cite{pregel} and Giraph~\cite{giraph}  have been recently
designed based on the {\em message-passing} distributed computing model~\cite{Lynch96,Peleg00}. In these systems, the input graph, which is simply too large to fit into a single machine, is distributed across a group of machines
that are connected  via a communication network and the machines jointly perform  computation in a distributed fashion by sending/receiving messages. A key goal in distributed large-scale computation is to minimize the amount of communication across machines, as this typically dominates the overall cost of the computation. \onlyLong{Indeed, Reed and Dongarra in a recent CACM article \cite{cacm} on distributed Big Data computing emphasize:
``It is important for all of computer science to design algorithms that communicate as little as possible, ideally attaining lower bounds on the amount of communication required."}

We study fundamental graph problems in a message-passing distributed computing model 
and present almost tight bounds on the number of communication rounds needed to solve these problems.
In the model, called the $k$-machine model~\cite{KlauckNPR15} (explained in detail in Section~\ref{sec:model}), the input graph (or more generally, any other type of data)  is distributed across a group of
$k$ machines that are pairwise interconnected via a communication network. 
The $k$ machines jointly perform computations on an arbitrary $n$-vertex input graph (where typically $n \gg k$) distributed
among the machines.
 The communication
is point-to-point via message passing. 
The goal is to minimize the {\em round  complexity},
i.e., the number of \emph{communication rounds}, given some (bandwidth) constraint on the amount of data that each
link of the network can deliver in one round.
 We address a fundamental issue in distributed computing of large-scale data:  What is the distributed (round) complexity of solving problems when each machine can see only {\em a portion of the input} and there is a {\em limited bandwidth} for communication? 
We would like to quantify the round complexity of solving problems as a function
of the {\em size of the input} and the {\em number of machines} used in the computation. 
\onlyLong{In particular, we would like to quantify how the  round complexity 
scales with the number of machines used:
more precisely, does the number of rounds scale
linearly (or even super-linearly) in $k$? 
And what is the best possible round complexity  for various  problems?}

A main contribution of this paper is a technique that can be used to show non-trivial lower bounds on the distributed complexity
(number of communication rounds) of large-scale data computations, and its application to graph problems.

\onlyShort{\vspace{-0.1in}}
\subsection{The Model}
\onlyShort{\vspace{-0.05in}}
\label{sec:model}

We now describe the adopted model of distributed computation, the \emph{$k$-machine model}
(a.k.a.\ the \emph{Big Data model}), introduced in~\cite{KlauckNPR15} and further investigated
in~\cite{spidal,fanchung,PanduranganRS16,BandyapadhyayIPP18,PanduranganRS18}.
The model consists of a set of $k \geq 2$ machines $\{M_1,M_2,\dots,M_k\}$ 
that are pairwise interconnected by bidirectional point-to-point communication links.
Each machine executes an instance of a distributed algorithm. The computation advances
in synchronous rounds where, in each round, machines can exchange messages over their
communication links and perform some local computation. Each link is assumed to have
a bandwidth of $B$ bits per round, i.e., $B$ bits can be
transmitted over each link in each round;  unless otherwise stated, we assume $B = \Theta(\polylog n)$.\footnote{There is an alternative (but equivalent) way to view this communication
restriction: instead of putting a bandwidth restriction on the links, we can put a restriction on
the amount of information that each \emph{machine} can communicate (i.e., send/receive)
in each round.  The results that we obtain in the bandwidth-restricted model will also apply
to the latter model~\cite{KlauckNPR15}. Also, our bounds can be easily rewritten in terms of the $B$ parameter.}  Machines do not share any memory and have no other
means of communication. We assume that each machine has access to a private source of true random bits.
We say that algorithm $\cA$ has \emph{$\epsilon$-error} if, in any run of $\cA$, the output of the machines corresponds to a correct solution with probability at least $1 - \epsilon$.
To quantify the performance of a randomized (Monte Carlo) algorithm $\cA$, we define the \emph{round complexity of $\cA$}  to be
the worst-case number of rounds required by any machine when executing $\cA$. 

Local computation within a machine is considered to happen instantaneously at zero cost, while
the exchange of messages between machines is the costly operation.\onlyLong{\footnote{This assumption
is standard in the context of large-scale data processing.
Indeed, even assuming communication links with a bandwidth of order of gigabytes
per second, the amount of data that typically has to be communicated can be in the order of tera- or
peta-bytes, which generally dominates the overall computation cost~\cite{LeskovecRU14}.}}
However, we note that in
all the algorithms of this paper, every machine in every round performs lightweight computations; in particular, these computations are bounded by a polynomial (typically, even linear) in the size of the input assigned to that machine. 

Although the $k$-machine model is a general model of distributed computation that can be applied to study any (large-scale data)
problem, in this paper we focus on investigating
graph problems in it. 
Specifically, we are given an input graph $G$ with $n$ vertices, each associated with a unique
\alert{integer ID from $[n]$}, and $m$ edges. To avoid trivialities, we will assume that $n \geq k$ (typically, $n \gg k$).
Initially, the entire graph $G$  is not known by \alert{any} single machine, but rather partitioned  among  the  $k$ machines in a
``balanced'' fashion, i.e., the nodes and/or edges of $G$ must be partitioned approximately evenly among the machines.
We assume a {\em  vertex-partition} model, whereby vertices (and their incident edges) are partitioned across machines.
Specifically, the type of partition that we will assume throughout is the {\em random vertex partition (RVP)},
i.e., vertices (and their incident edges) of the input graph are assigned randomly to machines. This is
the typical way used by many real graph processing systems, such as Pregel~\cite{pregel} and Giraph~\cite{giraph,facebook}, to partition the input graph
among the machines; it is easy to accomplish, e.g., via hashing.

More formally, in the \emph{random vertex partition} model  each vertex of $G$  is assigned independently and uniformly at
random to one of the $k$ machines.\onlyLong{\footnote{An
alternate partitioning model,  called  the \emph{random edge partition (REP)} model has also been studied~\cite{woodruff,PanduranganRS16}: here, each edge of $G$ is
assigned independently and randomly to one of the $k$ machines.  One can extend our results to get bounds for the REP model since it is
easy to show that one can transform the input partition from one model to the other in $\tilde{O}(m/k^2 + n/k)$ rounds.}} If a vertex $v$ is assigned to machine $M_i$ we say that $M_i$ is the {\em home machine}
of $v$ and, with a slight abuse of notation, write $v \in M_i$. When a vertex is assigned to a machine, all its
incident edges
 are known to that machine as well, i.e., the home machine initially knows the IDs
of the neighbors of that vertex as well as the identities of their home machines (and
the weights of the corresponding edges in case $G$ is weighted). For directed graphs, we assume that out-edges of
vertices are known to the assigned machine. (However, we note that our lower bounds hold even if both in- and out-edges are known to the home machine.)
An immediate property of the RVP model is that the number of vertices at each machine is \emph{balanced},
i.e., each machine is the home machine of $\tilde \Theta(n/k)$ vertices with high probability (see~\cite{KlauckNPR15}); we shall assume this
throughout the paper. A convenient way to implement  the RVP model is through hashing: each vertex (ID)
is hashed to one of the $k$ machines. Hence, if a machine knows a vertex ID, it also knows where it is hashed to. 

Eventually,  in a computation each machine $M_i$, for each $1 \leq i \leq k$, must set a designated local output variable $o_i$ (which need
not depend on the set of vertices assigned to machine $M_i$), and the \emph{output configuration} $o=\langle o_1,\dots,o_k\rangle$
must satisfy certain feasibility conditions w.r.t.\ problem $\cP$. 
For example, when considering the PageRank problem,
each $o_i$ corresponds to  PageRank values of (one or more nodes), such that the PageRank value of each node of the graph
should be output by {\em at least one} (arbitrary) machine.

\onlyShort{
\vspace{-0.1in}
\subsection{Our Results}
\vspace{-0.05in}
}
\onlyLong{
\subsection{Our Results}
}
\label{sec:results}
We present a general information-theoretic approach for showing non-trivial round
lower bounds for certain graph problems in the $k$-machine model. This approach can be useful in the context of showing
round lower bounds for many other (including non-graph)  problems in a distributed setting where the input is
partitioned across several machines and the output size is large.
Using our approach we show almost tight (up to logarithmic factors) lower bounds for two fundamental, seemingly unrelated,
graph problems, namely PageRank computation and triangle enumeration.\onlyShort{\footnote{For a brief description, background, and significance of these problems, we refer to the full paper (in Appendix).}} These lower bounds apply to distributed computations in essentially all point-to-point communication models, since they apply even to a synchronous complete network model (where $k = n$), and {\em even} when the input is partitioned {\em randomly}, and thus they apply to worst-case balanced partitions as well (unlike some previous lower bounds, e.g., \cite{woodruff}, which apply only under some worst-case  partition). 
 
To demonstrate the near-tightness of our lower bounds we present optimal (up to a $\polylog(n)$ factor) distributed algorithms for these problems.
All these algorithms exhibit a round complexity that scales  {\em superlinearly} in $k$, improving significantly over previous results. 

\noindent {\bf 1. PageRank Computation.}  In Section~\ref{sec:prlower} we show an almost tight  lower bound of
$\tilde{\Omega}(n/k^2)$ rounds.\footnote{Notation $\tilde \Omega$ hides a $1/\text{polylog}(n)$ factor,
and $\tilde O$ hides a $\text{polylog}(n)$ factor and an additive $\text{polylog}(n)$ term.}
In Section~\ref{sec:pralgo} we present an algorithm that computes the
PageRank of all nodes of a graph in $\tilde{O}(n/k^2)$ rounds, thus improving over the previously known bound of $\tilde{O}(n/k)$ rounds~\cite{KlauckNPR15}.

\noindent {\bf 2. Triangle Enumeration.} 
In Section~\ref{sec:trianglelb} we show that there exist graphs with $m$ edges where any
distributed algorithm requires  $\tilde{\Omega}(m/k^{5/3})$ rounds. %
In Section~\ref{sec:trialgo} we present an algorithm that enumerates all the triangles of a graph in $\tilde{O}(m/k^{5/3} + n/k^{4/3})$ rounds.
This improves over the previously known bound of $\tilde{O}(n^{7/3}/k^2)$ rounds~\cite{KlauckNPR15}.
\iffalse
Our algorithm chooses a parameter $\alpha$ which depends on $m,n$, and
$k$, and handles vertices with degree below $\alpha$ and above $\alpha$ in a different way so as to optimize the overall communication cost.
(The works of \cite{alon,KolountzakisMPT12} also use a degree-based partitioning for triangle counting.)
Our lower bound makes use of a graph in~\cite{szemeredi} and shows that
any algorithm will incur $\tilde{\Omega}(m/k^2)$ communication rounds even under random vertex partitioning.
\fi

%
%

Our technique can be used to derive lower bounds in other models of distributed computing as well.
Specifically, the approach used to show the lower bound for triangle enumeration can be adapted
for the popular congested clique model (discussed in Section~\ref{sec:related}), yielding an
$\Omega(n^{1/3}/\log n)$ lower bound for the same problem.\footnote{A preliminary version
of this paper, appeared on arXiv~\cite{tightbounds}, contained a slightly worse lower bound of the form
$\Omega(n^{1/3}/\log^3 n)$; later, a subsequent work by Izumi and Le Gall~\cite{IzumiG17} showed
a lower bound of the form $\Omega(n^{1/3}/\log n)$ using our information-theoretic approach.}
(Notice that this does not contradict the impossibility result of~\cite{DruckerKO14}, which states
that any super-constant lower bound for the congested clique would give new lower bounds in circuit complexity:
because of the size required by any solution for triangle enumeration, Remark~3 in~\cite{DruckerKO14} does not apply.)
To the best of our knowledge, this is the first {\em super-constant} lower bound known for the congested clique model.
(Previous bounds were known for weaker versions of the model, e.g., which allowed only broadcast communication, or which
applied only to deterministic algorithms~\cite{DruckerKO14}, or for implementations of specific algorithms~\cite{Censor-HillelKKLPS15}.)

Our bounds for triangle enumeration also apply to the problem of enumerating all the \emph{open triads}, that is,
all the sets of three vertices with exactly two edges. Our techniques and results can be generalized to the enumeration of
other small subgraphs such as cycles and cliques.

%
%

%
%
%
%

%
%
%
%
%
%
%
%
\iffalse
It is important
to note that  under a different output requirement (explained below) 
there exists a $\tilde \Omega(n/k)$-round lower bound for computing a spanning tree of
a graph~\cite{KlauckNPR15}, which also implies the same lower bound for other fundamental problems
such as computing an MST, breadth-first tree (BFS), and shortest paths tree (SPT). However, this lower bound holds
under the ``vertex-centric requirement'' that each vertex (i.e., the machine which hosts the vertex) must know
at the end of the computation the status of all of its incident edges (whether they belong to a ST or not) and output
their respective status. (This is the output criterion that is usually required in distributed algorithms~\cite{Lynch96,Peleg00}.)
The lower bound proof exploits this criterion to show that in any algorithm some machine must receive
$\Omega(n)$ bits of information, and since any machine has $k-1$ links, this gives a $\tilde \Omega(n/k)$ lower bound.
On the other hand, if we relax the output criterion to require the final status of each edge to be known by \emph{some}
machine (different machines might know the status of different edges), then we show that this can be accomplished in
$\tilde{O}(n/k^2)$ rounds. 

%
%
We show a matching lower bound of  $\tilde{\Omega}(n/k^2)$ for many verification problems by using communication complexity
techniques (the lower bound of  $\tilde{\Omega}(n/k^2)$ for ST, MST, and minimum cut follows directly from the
$\tilde{\Omega}(n/k^2)$ lower bound for connectivity shown in~\cite{KlauckNPR15}.) 
\fi
%
%
\onlyShort{Due to lack of space, full proofs and additional details are deferred to the full paper (in Appendix). 
}

\onlyShort{
\vspace{-0.1in}
\subsection{Overview of Techniques}
\vspace{-0.05in}
}
\onlyLong{
\subsection{Overview of Techniques}
}
\label{sec:technical}

\noindent {\bf Lower Bounds.} In \Cref{thm:expectedruntime} we prove a  general result, the {\em General Lower Bound Theorem}, which relates the round complexity in the $k$-machine model to the minimal amount of information required by machines for correctly solving a problem.
While $\pagerank$ and triangle enumeration are fundamentally different problems, we derive lower bounds for both problems via the ``information to running time'' relationship of \Cref{thm:expectedruntime}. 
The  General Lower Bound Theorem   gives  two probabilistic bounds
that must be  satisfied in order to obtain a lower bound on the round complexity of any problem. The two bounds together capture
the decrease in uncertainty (called {\em surprisal}, see Section~\ref{sec:lb}) that happens to some machine as a result
of outputting the solution. 
We can show that this ``surprisal change''  represents the maximum expected ``Information Cost"  over all machines which 
can be used to lower bound the run time. The proof of the General Lower Bound Theorem makes use of information-theoretic machinery,
yet its application requires no use of information theory.

We conjecture that \Cref{thm:expectedruntime} can be used to obtain lower bounds for various problems (including non-graph problems) that
have a relatively {\em large output size} (e.g.,  shortest paths, sorting, matrix multiplication, etc.) thus complementing the approach based on communication complexity (see, e.g., \cite{peleg-bound,sicomp12,podc11,podc14,oshman-survey,DruckerKO14,KlauckNPR15,PanduranganRS16,PanduranganPS16}
and references therein). In fact, our approach, as demonstrated in the case of triangle enumeration, can yield stronger round lower bounds as well as message-round tradeoffs compared to approaches that use communication complexity techniques (more on this in the next paragraph). Our approach, as demonstrated in the case of PageRank, can yield tight lower bounds for problems (including, and especially, under a stochastic/random partition of the input) where communication complexity techniques are not obvious. In fact, for many problems, applying the General Lower Bound Theorem gives non-trivial lower bounds in a fairly straightforward way  that are  not   (at least easily) obtainable by communication complexity techniques. To give an example, the work of Klauck et al. \cite{KlauckNPR15} showed a  lower bound of $\tilde{\Omega}(n/k^2)$ for connectivity by appealing to random partition communication complexity---this involved proving the classical set disjointness lower bound {\em under random input partition}, which involved non-trivial work. On the other hand, the same lower bound of  $\tilde{\Omega}(n/k^2)$ for MST can be shown directly\footnote{The lower bound graph can be a complete graph with random edge weights.} via the General Lower Bound Theorem (this bound is tight due to the algorithm of \cite{PanduranganRS16}).
To give another example, consider the problem of distributed  sorting (see, e.g., \cite{PanduranganPS16}), whereby $n$ elements are randomly distributed across  the $k$ machines and the requirement is that, at the end, the $i$-th machine must hold the $(i-1)k +1, (i-1)k +2,\dots, i \cdot k$-th order statistics. One can use the General Lower Bound Theorem to show a $\tilde{\Omega}(n/k^2)$ lower bound for this problem (and this is tight, as there exists an $\tilde{O}(n/k^2)$-round sorting algorithm). Note that the same lower  bound (under a random partition) is harder to show using communication 
complexity techniques.\footnote{Assuming an adversarial (worst-case) balanced partition (i.e., each machine gets $n/k$ elements), using multi-party communication complexity techniques one can show
the same lower bound \cite{PanduranganPS16}; but this is harder to show under random partition.}

We also note that tight {\em round} complexity lower bounds do not always directly follow from exploiting {\em message (bit)} complexity lower bounds obtained by leveraging communication complexity results. For example, for the problem of triangle enumeration, even assuming the highest possible message lower bound of $\Omega(m)$, this would directly imply a {\em round} lower bound of $\tilde{\Omega}(m/k^2)$ (since $\Theta(k^2)$ messages can be exchanged in one round) and not the tight $\tilde{\Omega}(m/k^{5/3})$ shown in this paper. Furthermore, our approach can show round-message {\em tradeoffs} giving stronger message lower bounds for algorithms
constrained to run in a prescribed round bound compared
to what one can obtain using  communication complexity approaches. In particular, for triangle enumeration, we show that any round-optimal  algorithm  that enumerates all triangles with high probability  in the $k$-machine model needs to exchange  a total of  $\tilde\Omega(m k^{1/3})$ messages in the worst case. 

We emphasize that our General Lower Bound theorem gives non-trivial lower bounds only when the output size is large enough, but it still
works seamlessly across all output sizes. To illustrate this, we note that the triangle enumeration lower bound
of $\tilde{\Omega}(m/k^{5/3})$ is true only for dense graphs, i.e., $m = \Theta(n^2)$.  In fact, the real lower bound derived through
our theorem is $\tilde{\Omega}((t/k)^{2/3}/k)$, where $t$ is the number of triangles in the input graph; this bound can be shown to apply even for sparse (random) graphs by extending our analysis.

Entropy-based information-theoretic arguments have been used in prior work~\cite{KlauckNPR15}.
However, there is a crucial difference, as explained next.
In~\cite{KlauckNPR15}, it was shown that $\tilde{\Omega}(n/k)$ is a lower bound for computing a  spanning tree (ST) of
a graph. However, this lower bound holds
under  the criterion that the machine which hosts the vertex (i.e., its home machine) must know
at the end of the computation the status of all of its incident edges (whether they belong to a ST or not) and output
their respective status. 
The lower bound proof exploits this criterion to show that any algorithm will require some machine receiving
$\Omega(n)$ bits of information, and since any machine has $k-1$ links, this gives a $\tilde \Omega(n/k)$ lower bound.
This argument fails if we require the final status of each edge to be known by \emph{some}
machine (different machines might know the status of different edges); indeed  under this output criterion,  it can be shown that
MST can be solved in $\tilde{O}(n/k^2)$ rounds~\cite{PanduranganRS16}. On the other hand, the lower bound proof technique of this paper applies to the less restrictive (and more natural)  criterion that any machine can output  any part of the solution.
In \cite{DBLP:journals/jcss/Bar-YossefJKS04}, a direct sum theorem is shown that yields a communication complexity lower bound for set disjointness. The method of \cite{DBLP:journals/jcss/Bar-YossefJKS04} can be applied to obtain lower bounds for functions $F$ that can be ``decomposed'' as $F(\mathbf{x},\mathbf{y}) = f(g(x_1,y_1),\dots,g(x_n,y_n))$, by
reduction from the information complexity of the function $g$. These methods do not seem applicable to our setting as we are considering problems where the output size is large.

\noindent {\bf Upper Bounds.} 
The Conversion Theorem of~\cite{KlauckNPR15} directly translates algorithms designed for a message passing model for network algorithms
to the $k$-machine model, and almost all the previous algorithms~\cite{KlauckNPR15,fanchung,spidal} were derived
using this result. In contrast, the present paper does not use the Conversion Theorem; instead, it gives direct solutions
for the problems at hand in the $k$-machine model, leading to improved algorithms with significantly better round complexity.

While our algorithms use techniques specific to each problem, we point out a simple, but key, unifying technique that proves very useful in designing fast algorithms,  called  {\em randomized proxy computation}.\footnote{Similar ideas have been used in parallel and distributed computation in different contexts, see, e.g.,~\cite{Valiant82,S98}.}   Randomized proxy computation is crucially used to distribute communication {\em and} computation across machines
to avoid congestion at any particular machine, which instead is redistributed evenly across all the machines. This is achieved, roughly speaking, by re-assigning the {\em executions} of individual nodes uniformly at random among the machines.  Proxy computation allows one to move away from the
communication pattern imposed by the topology of the input graph, which can cause congestion at a particular machine,
to a more balanced communication overall.
\onlyLong{For example, a simple use of this strategy in the triangle enumeration algorithm (see Section~\ref{sec:trialgo}) is as follows: each edge in the graph is assigned a random machine as its proxy; the proxy does computation ``associated" with the edge. This alleviates the congestion associated with machines having high-degree nodes. A slightly more sophisticated use of randomized proxy computation is made
in our PageRank algorithm (see Section \ref{sec:pralgo}).}

\onlyShort{
\vspace{-0.1in}
\subsection{Related Work and Comparison}
\vspace{-0.05in}
}
\onlyLong{
\subsection{Related Work}
}
\label{sec:related}
\onlyLong{
The theoretical study of large-scale graph computation in distributed systems is relatively new.
Several works have been devoted to developing MapReduce graph algorithms
(e.g., see~\cite{lin-book,filtering-spaa,LeskovecRU14,soda-mapreduce,andoni} and references therein).  
We note that  the flavor of theory developed for MapReduce is quite different compared to this paper.
Minimizing communication  is also a key motivation in MapReduce algorithms (e.g., see \cite{LeskovecRU14}); however this is
generally achieved by making sure that the data is made small enough  quickly
to fit into the memory of a single machine, such as in the MapReduce algorithm of~\cite{filtering-spaa} for MST.\footnote{We note that in the $k$-machine model the memory usage is also implicitly captured. For example, consider
the $\pagerank$ algorithm of this paper. Each machine starts with a $1/k$ fraction of the input size (i.e., $\tilde{O}((m+n)/k + \Delta)$), and since the algorithm takes $\tilde{O}(n/k^2)$ rounds, the total number of messages received by a machine during the entire execution of the algorithm is $\tilde{O}(n/k)$. Furthermore, since the local computation uses only $\tilde{O}(n/k)$ space (i.e., essentially linear in the size of the input restricted to that machine), the overall memory used remains the same as the initial input to the machine.}
}

For a  comparison of the $k$-machine model  with other parallel and distributed models proposed for large-scale data processing, including Bulk Synchronous Parallel (BSP) model~\cite{bsp},  MapReduce~\cite{soda-mapreduce}, and the congested clique, we refer to \cite{grigory-blog}. 
In particular, according to~\cite{grigory-blog},
``Among all models with restricted communication the ``big data'' [$k$-machine] model is the one most similar to the MapReduce model".  \onlyShort{More details on related work (in particular, comparison of the $k$-machine model with other models), can be found in the full paper.} 

Klauck et al.~\cite{KlauckNPR15} present lower and upper bounds for several fundamental graph
problems in the $k$-machine model. In particular, they presented weaker upper bounds
for PageRank and triangle verification (which also works for triangle enumeration), which are substantially improved in this paper.
They do not present any non-trivial lower bound for any of these problems. Also, as pointed out earlier, some lower bounds shown in~\cite{KlauckNPR15}, most notably the $\Omega(n/k^2)$ lower bound of MST (under random input partition and under the requirement that each MST edge has to be output by {\em some} machine), can be shown in a simpler way using the General Lower Bound Theorem of this paper.
Pandurangan et al.~\cite{PanduranganRS16} showed  $\tilde{O}(n/k^2)$-round algorithms in the $k$-machine model for connectivity, MST, approximate min-cut,  and other graph verification problems. The algorithmic techniques used in that paper  (except for the randomized proxy computation) cannot be applied for PageRank and triangle enumeration.

\onlyLong{
The $k$-machine model is closely related to the BSP model \cite{bsp}; it can be considered to be a simplified version of BSP, where local computation is ignored and synchronization happens
at the end of every round (the synchronization cost is  ignored). Unlike BSP which has a lot of different parameters (which typically makes it harder to prove rigorous theoretical bounds \cite{grigory-blog}),
the $k$-machine model is characterized by one parameter (the number of machines) which allows one to develop and prove clean bounds
and serves as a basis for comparing various distributed algorithms. 
}

\onlyLong{
The $k$-machine model is also closely related to the classical \congest model~\cite{Peleg00}, and in
particular to the \emph{congested clique} model, which recently has received considerable attention (see,
e.g., \cite{LotkerPPP05,LenzenW11,Lenzen13,DruckerKO14,Nanongkai14,Censor-HillelKKLPS15,HegemanPPSS15,GhaffariP16,Jurdzinski018}). 
The main difference is that the $k$-machine model is aimed at the study of large-scale computations,
where the size $n$ of the input is significantly bigger than the number of available machines $k$, and thus
many vertices of the input graph are mapped to the same machine, whereas the two aforementioned models are aimed
at the study of distributed network algorithms, where $n = k$ and each vertex corresponds to a dedicated machine.
More ``local knowledge'' is
available per vertex (since it can access for free information about other vertices in the same machine) in the $k$-machine
model compared to the other two models. On the other hand, all vertices assigned to a machine have to communicate
through the links incident on this machine, which can limit the bandwidth (unlike the other two models where
each vertex has a dedicated processor). 
These differences manifest in the design of fast algorithms for these models. In particular, the best distributed algorithm in the congested clique model  may not directly yield the fastest algorithm in the $k$-machine model~\cite{PanduranganRS16}.
}
\iffalse
For example, the fastest algorithms for MST in the congested clique model
(\cite{LotkerPPP05,HegemanPPSS15}) require $\Theta(n^2)$ messages; implementing
these algorithms in the $k$-machine model requires $\Theta(n^2/k^2)$ rounds, unlike the optimal $\tilde{O}(n/k^2)$ bound of this paper.\peter{Since we don't care about polylog in the k-machine model, some easy adaptation of your low msg complexity PODC'15 algorithm should result in exactly the k-machine algorithm of $n/k^2$. Maybe PageRank is a better example? It seems that the new $k$-machine PageRank algorithm explicitly exploits locality in the $k$-machine model (which doesn't come from the classic algorithm).  }

An example that illustrates the difference between the models is triangle enumeration. In Section~\ref{sec:Triangle}
we present an algorithm for triangle enumeration  that runs
in $\tilde O(m/k^{5/3} + n/k^{4/3})$. The best algorithm known for triangle enumeration in the congested clique runs in $O(n^{1/3})$
rounds and exchanges a total of $O(n^{7/3})$ messages~\cite{DolevLP12}. This algorithm can be implemented in the $k$-machine model via the Conversion
Theorem of~\cite{KlauckNPR15} with a round complexity of $\tilde{O}(n^{7/3}/k^2 + n^{4/3}/k)$.
The $k$-machine bounds allows one to see how the round complexity scales with $k$ and the graph size; since $k \leq n$,
the $\tilde O(m/k^{5/3})$ bound is never worse than $\tilde O(n^{7/3}/k^2)$, and can be significantly smaller.
%
%
%
\fi

\onlyLong{
PageRank and triangle enumeration have received considerable attention in other models of distributed computing (see, e.g.,
\cite{SuriS11,ParkMK16,KoutrisBS16,Lin2,Lin3} and references therein). However, none of these results and techniques therein
can be translated to yield the bounds shown in this paper. A result that goes close is the lower bound for triangle
enumeration in the Massively Parallel Computation (MPC) model~\cite{KoutrisBS16}; this, however assumes
a worst-case initial partition of the input, whereas our lower bound holds even under a random (balanced) partition.
}

\onlyLong{
\subsection{Preliminaries}\label{sec:prelim}

\paragraph{\pagerank}
\pagerank is one of the most important measures to rank the importance
of nodes in a graph, and was first proposed
to rank  Web pages \cite{brin}.
The \pagerank of a graph $G = (V, E)$ is defined as follows. Let $\epsilon$ be a small constant which is fixed ($\epsilon$ is called the {\em reset} probability, i.e., with probability $\epsilon$ the random walk restarts from a node chosen uniformly at random among all nodes in the network). The \pagerank (vector) of a graph (e.g., see \cite{mcm-avrachenkov,ppr-bahmani2010,pr-survey+05,DasSarmaGP11}) is the {\em stationary distribution} vector $\pi$ of the following special type of random walk: at each step of the random walk, with probability $\epsilon$ the walk restarts from a randomly chosen node, and with probability $1-\epsilon$ the walk follows a randomly chosen outgoing (neighbor) edge from the current node and moves to that neighbor.
Computing the \pagerank and its variants efficiently in various computation models has been of tremendous research interest in both academia and industry. For a detailed survey of PageRank see, e.g., \cite{pr-survey+05,LangvilleM03}.  %

There are mainly two broad approaches to the \pagerank computation (see, e.g., \cite{bahmani11}).
One is the use of linear algebraic techniques (e.g., the Power Iteration \cite{page99}), and the other
is Monte Carlo methods~\cite{mcm-avrachenkov}.
In the Monte Carlo method, the basic idea is to approximate \pagerank by directly simulating the
corresponding random walk and then estimating the stationary distribution with the performed walk's distribution \cite{pagerank,mcm-avrachenkov}.

\paragraph{Triangle enumeration}
The triangle enumeration problem is to enumerate all the triangles in a graph, where a
triangle is a set of three vertices all adjacent to each other.\footnote{Sometimes this problem is also referred
to as \emph{triangle listing}, although there is a small difference: in triangle listing the output
must be generated and stored in memory, whereas in triangle enumeration the output is not required to
be stored. This distinction is relevant in bounded-memory models.}
This problem has attracted much interest because of its numerous practical applications,
including the analysis of social processes in networks~\cite{WattsS98,FoucaultVC10},
community detection~\cite{BerryHLP11}, dense subgraph mining~\cite{WangZTT10},
joins in databases~\cite{NgoRR13}, and the solution of systems of geometric
constraints~\cite{FudosH97}. The interested reader may refer to~\cite{ChuC12,BerryFNPSW15}
for additional applications.

Triangle detection and triangle counting are also well-studied problems, and potentially
significantly easier than triangle enumeration; however, we emphasize that for many
applications, including all the aforementioned ones, triangle detection and triangle
counting are not enough, and a complete enumeration of all the triangles is required.

The problem of finding triplets of vertices that consist of exactly two edges, usually called \emph{open triads},
has obvious applications, e.g., in social networks~\cite{wasserman}, where vertices
represent people, edges represent a friendship relation, and open triads can be used to recommend friends.
The problem of enumerating small subgraphs and cliques have numerous applications \cite{wasserman,WangZTT10,ChuC12,BerryFNPSW15}.
}

\onlyShort{\vspace{-0.1in}}
\section{Lower Bounds}
\label{sec:lb}
\onlyShort{\vspace{-0.05in}}
\subsection{A General Lower Bound Theorem}
\onlyShort{\vspace{-0.05in}}
In this section we present a result, called {\em General Lower Bound Theorem}, which provides a general way
to obtain round lower bounds in the $k$-machine model. In Section~\ref{sec:proofglb} we provide the full proof of
this result. We will then apply it to derive lower bounds for two graph problems, namely,
PageRank computation (Section~\ref{sec:prlower}) and triangle enumeration (Section~\ref{sec:trianglelb}).

Consider an $n$-vertex input graph $G$  partitioned across the machines via the random-vertex partition  in the $k$-machine model. Note that the input graph $G$ is sampled from a probability distribution on a (suitably chosen) set of graphs $\cG$. (For example, in the case of PageRank, $\cG$ is the set of all possible instantiations
of the lower bound graph $H$ shown in Figure \ref{fig:PRlowerbound}.) %
Consider a partition $\bp = (p_1,\dots,p_k)$ of an input graph $G$. We use boldface $\bp$ to denote a vector and $p_i$ to denote the $i$-th entry of $\bp$, which corresponds to the subgraph assigned to machine $M_i$.
In our analysis, we frequently condition on the event that a subgraph $p_i \subseteq G$ is assigned to a certain machine $M_i$. 
To simplify the notation, we also use $p_i$ to denote the event that this happens, e.g., $\Prob{ E \mid p_i }$ is the probability of event $E$ conditioned on the assignment of $p_i$ to machine $M_i$.

Let $\Pi_i$ be the random variable representing the transcript of the messages received by machine $M_i$ across its $k-1$ links when executing a given algorithm $\cA$ for (at most) $T$ rounds, and let $\prt$ be the set of all possible partitions of the graphs in $\cG$ among the $k$ machines. 
The execution of algorithm $\cA$ is fully determined by the given input partitioning $\bp \in \prt$ and the public random bit string $R \in \rand$, where $\rand$ is the set of all possible strings that are used as random bit string by the algorithm. 
Note that $R$ is itself a random variable.
Similarly as above, we write $\Prob{ E \mid p_i,r }$ when conditioning event $E$ on the events that the public random string is $r$ and machine $M_i$ obtains subgraph $p_i$ as its input, where $\bp = (p_1,\dots,p_i,\dots,p_k)$ and $(\bp,r) \in \prt\times\rand$. We use $\cA_i(\bp,r)$ to denote the output of machine $M_i$, when executing the algorithm for a given $(\bp,r)$. For technical reasons, we assume that the output $\cA_i(\bp,r)$ also includes $M_i$'s initial graph input $p_i$ and the random string $r$.\footnote{Any given algorithm can be modified to achieve this behavior by using only local computation.}
\begin{theorem}[General Lower Bound Theorem]\label{thm:expectedruntime}
Let $\infcost = \infcost(n,k)$ be a positive integer-valued function called \emph{information cost}, and let $Z$ be a random variable depending only on the input graph.
Consider a $T$-round $\epsilon$-error algorithm $\cA$, for some $\epsilon = o(\infcost/\entropy{Z})$, where $\entropy{Z}$ is the entropy of $Z$.
Let $\good \subseteq \prt\times\rand$ be a set of pairs $(\bp,r)$ where $\bp=(p_1,\dots,p_k) \in \prt$ is an input partition and $r \in \rand$ is a public random string, and $|\good| \ge (1-\epsilon - n^{-\Omega(1)})|\prt\times\rand|$.
Suppose that, for every $(\bp,r) \in \good$, there exists a machine $M_i$ receiving input graph $p_i$ and outputting $\cA_i(\bp,r)$, such that 
\begin{align} 
  \Prob{Z\!=z\! \mid p_i, r} &\le \left(\tfrac{1}{2}\right)^{\entropy{Z} - o(\infcost)},\label{eq:precond1} \\
  \Prob{Z\!=z\! \mid \cA_i(\bp,r),p_i,r} &\ge \left(\tfrac{1}{2}\right)^{\entropy{Z} - \infcost} \label{eq:precond2},
\end{align}
for every $z$ that has nonzero probability conditioned on events $\out_i\!=\!\cA_i(\bp,r)$, $P_i\!=\!p_i$, and $R\!=\!r$.
Then, if $B$ denotes the per-round communication link bandwidth, it holds that  
\begin{align} \label{eq:lb}
  T %
  =\Omega\left(\frac{\infcost}{B k}\right).
\end{align}
\end{theorem}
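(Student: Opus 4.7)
The plan is to prove this via an information-theoretic argument that converts the pointwise surprisal conditions (\ref{eq:precond1})-(\ref{eq:precond2}) into a lower bound on the entropy of the transcript received by some single machine, and then combines this with the trivial bandwidth bound $(k-1)BT$ on the size of any incoming transcript.

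Step 1 (pointwise surprisal gap into mutual information). Fix $(\bp,r)\in\good$ together with the witness machine $M_i = M_i(\bp,r)$. Reading (\ref{eq:precond1}) and (\ref{eq:precond2}) as statements about the surprisal $-\log\Prob{Z{=}z\mid\cdot}$, they say that the surprisal of $Z$ at every relevant $z$ drops from at least $\entropy{Z}-o(\infcost)$ before seeing $\cA_i(\bp,r)$ down to at most $\entropy{Z}-\infcost$ after seeing it. Averaging these pointwise surprisals with respect to the conditional distribution of $Z$ given $(p_i,r)$ and $(\cA_i(\bp,r),p_i,r)$ respectively yields $\entropy{Z\mid p_i,r}\ge \entropy{Z}-o(\infcost)$ and $\entropy{Z\mid \cA_i(\bp,r),p_i,r}\le\entropy{Z}-\infcost$. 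Their difference is exactly $\informationS{}{Z;\cA_i(\bp,r)\mid p_i,r}\ge\Omega(\infcost)$.

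Step 2 (fix a single witness machine). Because the witness index $i(\bp,r)$ depends on the instance, apply the pigeonhole principle to pick a fixed $i^*\in[k]$ such that $i(\bp,r)=i^*$ on a subset $\good^*\subseteq\good$ of measure at least $|\good|/k$. Within $\good^*$ the per-pair bound from Step 1 still reads $\informationS{}{Z;\cA_{i^*}(\bp,r)\mid p_{i^*},r}\ge\Omega(\infcost)$; the pigeonhole costs only in the measure of the restricted event, not in the per-pair information.

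Step 3 (data processing + bandwidth). The output $\cA_{i^*}(\bp,r)$ is a deterministic function of $(p_{i^*},r)$ and the transcript $\Pi_{i^*}$ on the $k-1$ incoming links of $M_{i^*}$. The data processing inequality gives $\informationS{}{Z;\cA_{i^*}(\bp,r)\mid p_{i^*},r}\le\informationS{}{Z;\Pi_{i^*}\mid p_{i^*},r}\le\entropy{\Pi_{i^*}}$. Since each of the $k-1$ incoming links transmits at most $B$ bits per round for at most $T$ rounds, $\Pi_{i^*}$ is supported on strings of length at most $(k-1)BT$, so $\entropy{\Pi_{i^*}}\le(k-1)BT$ uniformly (this bound is robust under any further conditioning, because it is purely a cardinality bound on the support).

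Step 4 (combine and absorb errors). Integrating the per-pair bound over $\good^*$ and invoking Steps 2-3 gives $(k-1)BT\ge\Omega(\infcost)$, modulo an error contribution from $\prt\times\rand\setminus\good$ of size at most $(\epsilon+n^{-\Omega(1)})\entropy{Z}=o(\infcost)$ by the hypothesis $\epsilon=o(\infcost/\entropy{Z})$. Rearranging yields (\ref{eq:lb}).

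The main obstacle is the correct handling of Step 2: naively averaging the per-pair mutual information over all pairs (rather than restricting to a witness-consistent subset) would either force us to sum over all $k$ machines, costing a factor of $k$ in the denominator and giving only $T=\Omega(\infcost/(Bk^2))$, or else leave us with a witness index that itself carries information about $Z$. Localizing to one fixed machine $M_{i^*}$ via pigeonhole and then using the universal support-size bound $\entropy{\Pi_{i^*}}\le(k-1)BT$, which is insensitive to conditioning, is what keeps the bound tight at $T=\Omega(\infcost/(Bk))$. A secondary subtlety is that the conditions (\ref{eq:precond1})-(\ref{eq:precond2}) are pointwise in $z$ rather than in expectation; this is fortunate because it lets us upgrade them directly to conditional entropies without any additional concentration argument, but it must be checked that the support over which these pointwise bounds are assumed matches the support used when taking expectations.
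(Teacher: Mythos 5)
Your proposal is correct and follows essentially the same route as the paper: convert the pointwise surprisal premises into bounds on $\entropy{Z \mid p_i, r}$ and $\entropy{Z \mid \out_i, p_i, r}$, deduce a mutual-information lower bound of $\infcost - o(\infcost)$ for the witness machine's output, transfer it to that machine's incoming transcript by data processing, and cap the transcript's entropy by $(k-1)(B+1)T$ via a support-size count. The only deviation is your pigeonhole step fixing a single witness machine $i^*$, which the paper avoids by letting the witness vary with $(\bp,r)$ (its ``critical index''); since the support-size bound on the incoming transcript holds uniformly for every machine and under any conditioning, the pigeonhole is harmless but unnecessary, and your worry about losing a factor of $k$ without it does not materialize.
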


\paragraph{Intuition}
We can think of Premise~\eqref{eq:precond1} as bounding the initial knowledge of the machines about the random variable $Z$.
On the other hand, Premise~\eqref{eq:precond2} shows that at least one machine is able to increase its knowledge about the value of $Z$ eventually, which we formalize by conditioning on its output in addition to the initial knowledge. 
Then, if there is a large set (called $\good$) of inputs where these premises hold, then our theorem says that the worst-case time of the algorithm must be sufficiently large.
These insights are formally captured by the \emph{self-information} or \emph{surprisal} of an event $E$, which is defined as $\log_2(1/\Prob{E})$ \cite{reza1961introduction} and measures the ``amount of surprise'' or information contained in observing an event $E$. 
Premises~\eqref{eq:precond1} and \eqref{eq:precond2} imply that, from some machine $M_i$'s point of view, the occurrence of $\{Z\!=\!z\}$ is ``$\Omega(\infcost)$ more surprising'' given its initial knowledge, compared to observing this event after computing the output.
We can show that this surprisal change $\infcost$ bounds from below the maximum communication cost over all machines.
In this light, \eqref{eq:lb} tells us that the run time of the algorithm is roughly a $(1/k B)$-fraction of the maximum expected information cost.

\onlyShort{\vspace{-0.1in}}
\subsection{Proof of the General Lower Bound Theorem}
\onlyShort{\vspace{-0.05in}}
\label{sec:proofglb}
\onlyShort{In the proof of \Cref{thm:expectedruntime} we make use of some standard definitions in information theory (see e.g.,\cite{CoverT06}(also found in the full paper).}
\onlyLong{In the proof of \Cref{thm:expectedruntime} we make use of some standard definitions in information theory, which we now recall (and which can be found, e.g., in~\cite{CoverT06}).}
\onlyLong{
Consider random variables $X$, $Y$, and $W$. The 
\emph{entropy of $X$} is defined as $\entropy{X } = -\sum_{x} \Prob{X = x}\log_2\Prob{X=x}$, and the \emph{conditional entropy} is defined as
\begin{align} \label{eq:conditionalEntropy}
\entropy{X \mid Y} = \sum_{y} \Prob{Y\!=\!y}\ \entropy{X \mid Y=y}. 
\end{align}
The \emph{mutual information between $X$ and $Y$ given some event $\{W\!=\!w\}$} is denoted by $\information{ X; Y \mid W\!=\!w}$, and given by 
\begin{align}
\information{ X; Y \mid W\!=\!w} 
    &= \entropy{ X \mid W\!=\!w} - \entropy{ X \mid Y, W\!=\!w}. \label{eq:defMutual}
\end{align}
From this it immediately follows that
\begin{align}
  \entropy{X \mid W\!=\!w} \ge \information{X ; Y \mid W\!=\!w}. \label{eq:entropyMutual}
\end{align}
}

\paragraph{Critical Index} For a given input graph partition $\bp$ and a random string $r$, we are interested in identifying the machine that has the maximum expected value of the amount of information that its output reveals about the random variable $Z$.
This motivates us to define the \emph{critical index} function as
\begin{align} \label{eq:criticalIndex}
  \ell(\bp,r) := \argmax_{1 \le i \le k} \information{ \out_i ; Z \mid p_i, r },
\end{align}
and define random variables 
\onlyLong{
\begin{align}
  \text{$\Pi_*(\bp,r) = \Pi_{\ell(\bp,r)}(\bp,r)$ and
  $\out_*(\bp,r) = \out_{\ell(\bp,r)}(\bp,r)$}. \label{eq:randomvars}
\end{align}
}
Intuitively speaking, for each $(\bp,r)\in\prt\times\rand$, the random variable $\out_*$ is the output of the machine $M_i$ (where $i$ depends on $\bp,r$) that attains the maximum 
mutual information between its output and the random variable $Z$.
For a given $(\bp,r)$, we use 
\begin{align} \label{eq:pstar}
p_* = p_{\ell(\bp,r)}
\end{align}
to denote the input partition of machine $M_{\ell(\bp,r)}$.
Note that $Z$ depends \emph{only} on the input graph, whereas $\Pi_*$, $P_*$, and $\out_*$ depend on the input graph and, in addition, also on the chosen partition $\bp$ and random string $r$.
From \eqref{eq:criticalIndex}, we immediately obtain the following property of the critical index.
\begin{observation} \label{obs:mutual}
  For all $(\bp,r) \in \prt\times\rand$, and for all $i \in [k]$, it holds that
  \[
  \information{\out_* ; Z \mid p_*, r} \ge \information{\out_i ; Z \mid p_i, r},
  \]
  where $p_* = p_{\ell(\bp,r)}$ and $\bp = (p_1,\dots,p_{\ell(\bp,r)},\dots,p_k)$.
\end{observation}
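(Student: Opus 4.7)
The plan is to prove Observation~1 as an immediate consequence of how the critical index $\ell(\bp,r)$ is defined in \eqref{eq:criticalIndex}, namely as $\argmax_{1\le i\le k} \information{\out_i ; Z \mid p_i, r}$. The notational conventions in \eqref{eq:randomvars} and \eqref{eq:pstar} set $\out_* = \out_{\ell(\bp,r)}$ and $p_* = p_{\ell(\bp,r)}$, so $\information{\out_* ; Z \mid p_*, r}$ is literally the maximum conditional mutual information attained over all machine indices. Hence it is at least as large as $\information{\out_i ; Z \mid p_i, r}$ for any specific $i \in \{1,\ldots,k\}$, which is exactly the claimed inequality.

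There is essentially no content to prove beyond unpacking the definitions; the purpose of the observation is to name this maximizing machine so that later steps of the argument can refer to it uniformly. The only minor formal point to check is that the argmax is well-defined: the index set $\{1,\ldots,k\}$ is finite, the conditional mutual information is a well-defined nonnegative real number for each $i$ (given any fixed $\bp$ and $r$), and ties may be broken by any fixed rule, for instance by taking the smallest maximizing index.

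The hard part, really, is not the observation itself but how it will be used subsequently in the proof of \Cref{thm:expectedruntime}. I expect the next stage of the argument to relate $\information{\out_* ; Z \mid p_*, r}$ to the length of the transcript $\Pi_*$ received by the critical machine, by leveraging the surprisal gap between premises \eqref{eq:precond1} and \eqref{eq:precond2} together with standard information-theoretic inequalities (data processing, chain rule, and the fact that conditional mutual information is dominated by the entropy of the transcript) to lower bound this quantity by $\Omega(\infcost)$ on the \good event. Combining the resulting per-machine information bound with the $(k-1)\cdot B$-bits-per-round bandwidth constraint then yields the claimed $\Omega(\infcost/(Bk))$ round lower bound.
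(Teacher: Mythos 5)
Your proposal is correct and matches the paper exactly: the paper offers no separate proof, stating only that the observation follows ``immediately'' from the definition of the critical index in \eqref{eq:criticalIndex} as the $\argmax$ of $\information{\out_i ; Z \mid p_i, r}$ over $i$, which is precisely your argument. Your remarks on well-definedness of the $\argmax$ and on how the observation feeds into \Cref{lem:cc} and the rest of the proof of \Cref{thm:expectedruntime} are accurate but not needed for the observation itself.
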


\begin{lemma} \label{lem:cc}
For every $(\bp,r) \in \prt\times\rand$ where $\bp = (p_1,\dots,p_*,\dots,p_k)$, it holds that
\[
\information{ \Pi_* ; Z \mid p_*, r} \ge \information{ \out_* ; Z \mid p_*, r}. 
\]
\end{lemma}
\onlyLong{
\begin{proof}
Consider a $(\bp,r)\in\prt\times\rand$ as described in the premise of the lemma.
It holds that
\begin{align*}
  \information{ \Pi_* ; Z \mid p_*, r}
    &\ge\max_{1 \le i \le k} \information{ \Pi_i ; Z \mid p_i, r} \tag{by Obs.~\ref{obs:mutual}}\\
    &=\max_{1 \le i \le k} \left( \entropy{ Z \mid p_i, r } - \entropy{ Z \mid \Pi_i, p_i, r }\right).\tag{by \eqref{eq:defMutual}}
\end{align*}
The random variable $\out_i$ which represents the output of machine $M_i$ is fully determined by the transcript $\Pi_i$, $M_i$'s input graph assignment (i.e., the random variable $P_i$), and the random bits $R$.
Therefore, we can use the bound
$\entropy{Z \mid \Pi_i, p_i,r} \le \entropy{Z \mid \out_i,p_i,r}$ in the right-hand side of the above inequality to obtain 
\begin{align*}
\information{ \Pi_* ; Z \mid p_*, r}
&\ge\max_{1 \le i \le k} \left( \entropy{ Z \mid p_i, r } - \entropy{ Z \mid \out_i, p_i, r }\right)\notag\\
   &= \max_{1 \le i \le k} \information{ \out_i ; Z \mid p_i, r} \notag\\
   &=\information{ \out_{\ell(\bp,r)} ; Z \mid p_{\ell(\bp,r)}, r} \tag{by definition of critical index, cf.\ \eqref{eq:criticalIndex}}\\
   &=\information{ \out_* ; Z \mid p_*, r},  \tag{by \eqref{eq:randomvars} and \eqref{eq:pstar}} 
\end{align*}
and the lemma follows.
\end{proof}
}
\begin{lemma} \label{lem:mutual}
  For all $(\bp,r) \in \good$ where $\bp=(p_1,\dots,p_k)$, there is an $i \in [k]$ (which satisfies (1) and (2) in the premise of the theorem) such that 
$    \information{\out_i ; Z \mid p_i, r} \ge \infcost - o(\infcost).$
\end{lemma}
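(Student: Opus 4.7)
The plan is to lift the pointwise probability bounds in~\eqref{eq:precond1} and~\eqref{eq:precond2} to a lower bound on mutual information via the KL divergence representation. Fix $(\bp,r) \in \good$ with $\bp = (p_1,\dots,p_k)$, let $M_i$ be the machine guaranteed by the theorem's hypothesis, and let $o^* = \cA_i(\bp,r)$. Dividing the lower bound in~\eqref{eq:precond2} by the upper bound in~\eqref{eq:precond1} gives, for every $z$ in the support of $Z$ conditioned on $\{o^*, p_i, r\}$, the pointwise log-likelihood bound
\[
\log_2 \frac{\Prob{Z = z \mid o^*, p_i, r}}{\Prob{Z = z \mid p_i, r}} \ge \infcost - o(\infcost).
\]
Averaging this bound against $\Prob{Z \mid o^*, p_i, r}$ yields the KL divergence bound $D_{KL}\!\left(\Prob{Z \mid o^*, p_i, r} \,\|\, \Prob{Z \mid p_i, r}\right) \ge \infcost - o(\infcost)$.

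Next I would invoke the standard identity $\information{\out_i ; Z \mid p_i, r} = \mathbb{E}_{o \sim \out_i \mid p_i, r}\!\left[D_{KL}\!\left(\Prob{Z \mid o, p_i, r} \,\|\, \Prob{Z \mid p_i, r}\right)\right]$. Since the individual KL terms are non-negative, the bound above applied only to the single output $o^*$ gives $\information{\out_i ; Z \mid p_i, r} \ge \Prob{\out_i = o^* \mid p_i, r} \cdot (\infcost - o(\infcost))$, which is not strong enough: a short calculation combining~\eqref{eq:precond1} and~\eqref{eq:precond2} shows that this singleton probability can be as small as $2^{-\Omega(\infcost)}$. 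To strengthen this, observe that the hypothesis of the theorem applies to every $(\bp',r) \in \good$ that agrees with $\bp$ on the $i$-th coordinate and induces the same critical machine $i$; hence the same log-likelihood bound, and therefore the same KL bound, holds for every output $\cA_i(\bp',r)$ arising from such a $\bp'$. Summing the KL contributions over this family of outputs, weighted by their conditional probabilities under $\out_i \mid p_i, r$, I obtain
\[
\information{\out_i ; Z \mid p_i, r} \ge q \cdot (\infcost - o(\infcost)),
\]
where $q$ is the conditional mass under $\bp \mid p_i, r$ of the good inputs consistent with this fixed choice of~$i$.

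The final step is to argue that $q = 1 - o(1)$. The density hypothesis $|\good| \ge (1 - \epsilon - n^{-\Omega(1)})|\prt \times \rand|$, combined with the error bound $\epsilon = o(\infcost / \entropy{Z})$, implies that an average random input lies in $\good$ with probability $1 - o(1)$; an averaging argument over $(p_i,r)$ then propagates this to the conditional distribution. Absorbing the resulting $(1-o(1))$ factor into the slack yields $\information{\out_i ; Z \mid p_i, r} \ge \infcost - o(\infcost)$, as claimed. The main obstacle I anticipate is making this final averaging step rigorous, since the premise's choice of critical machine can a priori depend on $\bp$; handling this dependence will likely require either a pigeonhole over the $k$ possible critical machines (absorbing an $O(\log k)$ term into the $o(\infcost)$ slack) or the observation that, in the intended applications, the critical machine is effectively determined by the pair $(p_i, r)$.
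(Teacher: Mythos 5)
Your argument is in substance the paper's argument in different notation: the identity $\information{\out_i ; Z \mid p_i, r} = \mathbb{E}_{o}\left[ D_{\mathrm{KL}}\left( \Prob{Z \mid o, p_i, r} \,\Vert\, \Prob{Z \mid p_i, r} \right) \right]$ is exactly the entropy-difference decomposition $\entropy{Z \mid p_i, r} - \entropy{Z \mid \out_i, p_i, r}$ used in the paper, and your pointwise log-likelihood-ratio bound of $\infcost - o(\infcost)$ is precisely the combination of the lower bound $\entropy{Z \mid p_i, r} \ge \entropy{Z} - o(\infcost)$ obtained from Premise~\eqref{eq:precond1} with the upper bound $\entropy{Z \mid \cA_i(\bp,r), p_i, r} \le \entropy{Z} - \infcost$ obtained from Premise~\eqref{eq:precond2}. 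The step you leave open --- showing that the aggregate conditional mass $q$ of outputs to which the per-output bound applies is $1-o(1)$ --- is exactly where the paper does its remaining work: it splits the average over outputs into the contributions of $\good$ and its complement, uses $\Prob{\good} \ge 1 - \epsilon - n^{-\Omega(1)}$ for the former and the trivial bound $\entropy{Z \mid \cdot} \le \entropy{Z}$ for the latter, and absorbs the loss because the hypothesis $\epsilon = o(\infcost/\entropy{Z})$ gives $\epsilon \cdot \entropy{Z} = o(\infcost)$ and $\entropy{Z}\cdot n^{-\Omega(1)} = o(1)$. So you should replace your heuristic averaging by this explicit good/bad split.

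Two corrections to your closing remarks. First, your fallback of pigeonholing over the $k$ possible critical machines does not cost an additive $O(\log k)$: it shrinks the usable conditional mass $q$ to as little as $1/k$, degrading the conclusion to $\Omega(\infcost/k)$, which is useless for the theorem. Second, you are right that the dependence of the premise's witnessing index $i$ on the full partition $\bp$ is a genuine subtlety; the paper's proof handles it by applying Premise~\eqref{eq:precond2} with the \emph{same} index $i$ to every good pair appearing in the conditional average (which is consistent with how the premise is verified in both applications, where the witness is any machine outputting an $\Omega(1/k)$ share of the solution), so you should adopt that reading of the premise rather than the pigeonhole.
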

\begin{proof}
  For a given $(\bp,r)\in\good$, let $M_i$ be a machine satisfying 
  \eqref{eq:precond2} (in addition to \eqref{eq:precond1}).
  By definition, %
  \begin{align}
    \label{eq:mutual}
    \information{\out_i ; Z \mid p_i, r} =\entropy{ Z \mid p_i, r} - \entropy{ Z \mid \out_i, p_i, r}.
  \end{align}
  We will now bound the terms on the right-hand side.
  By definition, we obtain
  \begin{align}
  \entropy{Z \mid p_i, r} &= - \sum_z \Prob{Z=z \mid p_i, r } \log_2\Prob{Z=z \mid p_i, r }\notag \\
  &\ge \left(\entropy{Z} - o(\infcost)\right)\sum_z \Prob{Z=z \mid p_i, r }\tag{by \eqref{eq:precond1}} \\
 & = \entropy{Z} - o(\infcost), \label{eq:condEntLow}
\end{align}
where the last inequality follows from $\sum_z \Prob{Z=z \mid p_i, r }= 1$. 

In the remainder of the proof, we derive an upper bound on $\entropy{ Z \mid \out_i, p_i, r}$.
Since 
\begin{align} 
  \entropy{ Z \mid \out_i, p_i, r} \le \entropy{Z \mid \out_i}, \label{eq:entUp2}
\end{align} 
we will proceed by proving an upper bound on the latter term.
To simplify the notation, we use ``$\cA_i(\bp,r)$'' as a shorthand for the event ``$\out_i = \cA_i(\bp,r)$''.  
By definition, we have
\begin{align} 
  \entropy{ Z \mid \out_i } 
  &= \sum_{\substack{(\bp,r) }} \Prob{\cA_i(\bp,r)}\ \entropy{ Z \mid \cA_i(\bp,r)}  \notag\\
  &= \!\!\!\sum_{\substack{(\bp,r) \in \good}} \Prob{\cA_i(\bp,r)}\ \entropy{ Z \mid \cA_i(\bp,r)} + \sum_{\substack{(\bp,r) \notin \good}} \Prob{\cA_i(\bp,r)}\ \entropy{ Z \mid \cA_i(\bp,r)}\notag \\
  &\le\!\!\! \sum_{\substack{(\bp,r) \in \good}} \Prob{\cA_i(\bp,r)}\ \entropy{ Z \mid \cA_i(\bp,r)} + \entropy{ Z }\left(\sum_{\substack{(\bp,r) \notin \good}} \Prob{\cA_i(\bp,r)}\right), \label{eq:entterm}
\end{align}
where the last inequality follows from $\entropy{Z} \ge \entropy{Z \mid \cA_i(\bp,r)}$.
Intuitively speaking, the first sum in \eqref{eq:entterm} represents the remaining uncertainty of $Z$ upon termination, assuming machines start with a hard input assignment (i.e., in $\good$), whereas the second term is weighted by the probability that either the input was easy or the algorithm failed (i.e. $\notin \good$).
The following claim bounds the entropy term in the first sum of \eqref{eq:entterm}, where $(\bp,r)$ is restricted to the set $\good$. 
\begin{claim} \label{cl:upper1}
  $\entropy{ Z \mid \cA_i(\bp,r)} \le  \entropy{Z} - \infcost$.
\end{claim}

\begin{proof}[Proof of Claim~\ref{cl:upper1}]
From the definition of entropy, we obtain
\begin{align}
  \entropy{ Z \mid \cA_i(\bp,r)} &=  - \sum_z \Prob{Z=z \mid \cA_i(\bp,r)}  \log_2\Prob{Z\!=\!z \mid \cA_i(\bp,r)}. \label{eq:entterm2}
\end{align}
Since we assume that machine $M_i$ also outputs its initial graph assignment (i.e., $p_i$) and the public random string $r$, it holds that
\[
\entropy{ Z \mid \cA_i(\bp,r)} = \entropy{ Z \mid \cA_i(\bp,r), p_i, r},  \notag
\]
which allows us to rewrite \eqref{eq:entterm2} as
\[
\entropy{ Z \mid \cA_i(\bp,r)} =  - \sum_z \Prob{Z=z \mid \cA_i(\bp,r), p_i, r} \cdot \log_2\Prob{Z\!=\!z \mid \cA_i(\bp,r), p_i, r}.
\]
Recalling that $M_i$ satisfies \eqref{eq:precond2}, we get
\begin{align*}
  \entropy{ Z \mid \cA_i(\bp,r)} 
  & \le \left(\entropy{Z} - \infcost\right) \sum_z \Prob{Z=z \mid \cA_i(\bp,r), p_i, r} = \entropy{Z} - \infcost, \notag
\end{align*}
since $\sum_z \Prob{Z=z \mid \cA_i(\bp,r), p_i, r} = 1$.
\end{proof}

We will now derive an upper bound on the second sum in \eqref{eq:entterm}.
\begin{claim}\label{cl:upper2}
  $\sum_{\substack{(\bp,r) \notin \good}} \text{\rm Pr}[\cA_i(\bp,r)]
  \le \epsilon + n^{-\Omega(1)}$.
\end{claim}
\begin{proof}[Proof of Claim~\ref{cl:upper2}]
  Consider the set $(\prt\times\rand) \setminus \good$.
  According to our model, the input graph and its partitioning among the machines correspond to choosing, uniformly at random, an element from $\prt$, whereas the random string $r$ is uniformly selected from $\rand$. 
  Since the output of machine $M_i$ is fully determined by $(\bp,r)$, we have
  \[
    \sum_{\substack{(\bp,r) \notin \good}}\!\!\!\!\! \Prob{\cA_i(\bp,r)} 
    = \!\!\!\!\!\! \sum_{\substack{(\bp,r) \notin \good}}\!\!\!\!\! \Prob{(\bp,r)}
    = \Prob{ (\prt\times\rand) \setminus \good }.
  \]
From the lower bound on the size of $\good$ in the theorem premise, we obtain an upper bound such that
  \begin{align}
  \sum_{\substack{(\bp,r) \notin \good}}\!\!\!\!\! \Prob{\cA_i(\bp,r)} 
  = \Prob{(\prt\times\rand)\setminus\good} \le \epsilon + n^{-\Omega(1)},  \notag
  \end{align}
thus proving the claim.  
\end{proof}
  Plugging the bounds in Claims~\ref{cl:upper1} and \ref{cl:upper2} into  \eqref{eq:entterm}, we get
  \begin{align*}
     \entropy{ Z \mid \out_i} 
    &\le \left(\entropy{Z} - \infcost\right)\sum_{\substack{(\bp,r) \in \good}} \Prob{\cA_i(\bp,r)} + \entropy{ Z } \left(\epsilon + n^{-\Omega(1)} \right) \\
    &\le \left(\entropy{Z} - \infcost\right)
    + \entropy{ Z } \left(\epsilon + n^{-\Omega(1)} \right).
 \end{align*}     
Assuming a sufficiently large constant in the exponent of $n^{-\Omega(1)}$, 
we observe that $\entropy{Z} \cdot n^{-\Omega(1)} = o(1)$ since $Z$ depends only on the input graph. 
By the premise of Theorem~\ref{thm:expectedruntime}, we have $\epsilon = o(\infcost/\entropy{Z})$
and $\infcost \le \entropy{Z}$, hence $\epsilon \cdot\entropy{Z} = o(\infcost)$.
From this and \eqref{eq:entUp2} we conclude that
\[
\entropy{ Z \mid \out_i, p_i, r} \leq \entropy{Z} - \infcost + o(\infcost).
\]
Plugging this upper bound and the lower bound of \eqref{eq:condEntLow} into the right-hand side of \eqref{eq:mutual}, completes the proof of Lemma~\ref{lem:mutual}.
\end{proof}

Recall that Lemma~\ref{lem:cc} holds for any $(\bp,r) \in \prt\times\rand$; in particular, even if we restrict our choice to the set $\good$.
Thus, for $(\bp,r) \in \good$, where $\bp = (p_1,\dots,p_k)$, let $i \in [k]$ be the index for which Lemma~\ref{lem:mutual} holds (which is the index of the machine satisfying Premises~\eqref{eq:precond1} and \eqref{eq:precond2}).
\onlyLong{%
This yields
\begin{align} %
  \entropy{\Pi_* \mid p_*, r}
  &\ge
  \information{ \Pi_* ; Z \mid p_*, r} \tag{by \eqref{eq:entropyMutual} }\\
  &\ge
  \information{\out_* ; Z \mid p_*, r} 
  \tag{by Lemma~\ref{lem:cc}}  \notag\\
  &\ge \information{\out_i ; Z \mid p_i, r} \tag{by Obs.~\ref{obs:mutual}}  \notag\\
  &\ge  \infcost - o(\infcost),      \label{eq:transcriptLower}
\end{align}
}
where the last inequality follows from Lemma~\ref{lem:mutual}. %
To complete the proof of Theorem~\ref{thm:expectedruntime}, we will argue that the worst-case run time needs to be large, as otherwise the entropy of machine $M_{\ell(\bp,r)}$'s transcript $\Pi_*$ would be less than $\infcost - o(\infcost)$.
The value of $\entropy{\Pi_* \mid p_*, r}$ is maximized if the distribution of $(\Pi_* \mid p_*, r)$ is uniform over all possible choices. In the next lemma we show that, during $T$ rounds of the algorithm, the transcript can take at most $2^{(B+1)(k-1)T}$ distinct values, and thus 
\begin{align}
  \entropy{\Pi_* \mid p*,r } \le \log_2\left(2^{(B+1)(k-1)T}\right) = O(B\ k\ T). \label{eq:transcriptUpper}
\end{align}

\begin{lemma}\label{lem:encoding}
  Suppose that some machine $M_i$ can receive a message of at most $B$ bits on each of its $k-1$ links in a single round.
  Let $\Gamma$ be the bits received by $M_i$ over its $k-1$ links during $T$ rounds. 
  Then, $\Gamma$ can take at most $2^{(k-1)(B+1)T}$ distinct values.  
\end{lemma}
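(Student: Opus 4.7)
The plan is to use a straightforward counting argument, treating each (link, round) pair as an independent slot and bounding the number of possible transmissions in each slot. The key point is that although the bandwidth bound permits messages of length \emph{at most} $B$ bits (so a machine might send anything from the empty message up to a $B$-bit message in a round), we can still bound the number of distinct per-slot transmissions by $2^{B+1}$, and then multiply across all $(k-1)T$ slots.

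First, I would fix a single incoming link $\ell$ of $M_1$ and a single round $t \le T$, and argue that in that round the set of possible contents arriving along $\ell$ has cardinality at most $\sum_{b=0}^{B} 2^b = 2^{B+1}-1 \le 2^{B+1}$. Here $2^b$ counts the bit-strings of length exactly $b$, and the sum ranges from $b = 0$ (the empty message, modeling the case in which no message is sent on that link in that round) to $b = B$ (a full $B$-bit message). This is the only place where one has to be mildly careful, and it is also the main (but very minor) obstacle: one must make sure that the representation of ``no message'' is not double-counted with the empty string, and that the receiver, being synchronous, can always tell link boundaries and round boundaries apart, so that distinct per-slot choices really give rise to distinct aggregate transcripts $\Gamma$.

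Next, I would observe that $\Gamma$ is fully determined by specifying, for each of the $k-1$ links and each of the $T$ rounds, which of the at most $2^{B+1}$ possible messages was delivered. Since the per-slot choices are mutually independent from a counting perspective, the number of distinct aggregate transcripts is at most
\[
\bigl(2^{B+1}\bigr)^{(k-1)T} \;=\; 2^{(k-1)(B+1)T},
\]
which is exactly the claimed bound. The entire argument is a counting exercise and does not use anything beyond the definition of the $k$-machine model's bandwidth constraint stated in \Cref{sec:model}.
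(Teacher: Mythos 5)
Your proof is correct and takes essentially the same approach as the paper: bound the number of possible transmissions on a single link in a single round by $2^{B+1}$ (the paper counts $2^B+1$ possibilities, you count $\sum_{b=0}^{B}2^b = 2^{B+1}-1$; both are at most $2^{B+1}$) and then take the product over all $(k-1)T$ link--round slots.
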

\onlyLong{
\begin{proof}
Since in a synchronous model one can convey information even by \emph{not} sending any bits in a given round,
there are at most $2^B + 1 < 2^{B+1}$ distinct possibilities for the communication received over a single link of bandwidth $B$ in any given round.
Thus, we can view the communication received over $M_i$'s $k-1$ links as a word $\omega_1$ of length $k-1$, where each character of $\omega_1$ is chosen from an alphabet of size (at most) $2^{B+1}$, resulting in $2^{(B+1)(k-1)}$ possible choices for $\omega_1$.
Finally, we view $\Gamma$, i.e., the communication received over the $T$ rounds, as a word of length $T$, where the alphabet size of each character is $\le 2^{(B+1)(k-1)}$, yielding $2^{(B+1)(k-1)T}$ many choices in total.
\end{proof}
}
Recall that the run time $T$ is the maximum time required by any machine $M_i$, over all random strings and input assignments, i.e., $T = \max_{(\bp,r)}\ T(\bp,r)$.
Combining \eqref{eq:transcriptLower} and \eqref{eq:transcriptUpper}, it follows that 
\begin{align*} %
  T  = \max_{(\bp,r)}\ T(\bp,r) 
   = \Omega\left( \frac{IC}{B k}\right).
\end{align*}
This completes the proof of Theorem~\ref{thm:expectedruntime}.

\onlyShort{\vspace{-0.1in}}
\subsection{A Lower Bound for PageRank Computation}\label{sec:prlower}
\onlyShort{\vspace{-0.05in}}

\begin{theorem}\label{thm:pageranklb}
Let $\cA$ be an algorithm that computes a $\delta$-approximation of the $\pagerank$ vector of an $n$-node graph for a small constant $\delta>0$ (depending on the reset probability), and suppose that $\cA$ succeeds with probability $\ge 1 - o(1/k)$.
Then, the run time of $\cA$ is $\Omega\left(\frac{n}{B \cdot k^2}\right)$, assuming a communication link bandwidth of $B$ bits per round and $k=\Omega(\log^2 n)$ machines.
This holds even when the input graph is assigned to the machines via random vertex partitioning. 
\end{theorem}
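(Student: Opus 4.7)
The plan is to apply the General Lower Bound Theorem (\Cref{thm:expectedruntime}) to a carefully constructed family of hard instances. First, I would define a family $\{H(z) : z \in \{0,1\}^s\}$ of $n$-vertex input graphs parameterized by a uniformly random bit-string $Z$ of length $s = \Theta(n)$, so that $\entropy{Z} = \Theta(n)$. Each bit $z_j$ is encoded by a small ``PageRank gadget'' consisting of a target vertex $t_j$, an auxiliary in-neighbor $w_j$ with edge $(w_j, t_j)$, and (if $z_j = 1$) an additional out-edge from $w_j$ to a dummy vertex $d_j$. The \pagerank of $t_j$ then depends on the out-degree of $w_j$; by choosing an appropriate baseline skeleton and reset probability, the induced PageRank gap between the two cases can be made larger than the approximation error $\delta$, so that any $\delta$-approximation at $t_j$ reveals $z_j$.

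Next, I would identify the critical machine. Under RVP, the $s = \Theta(n)$ target vertices are spread uniformly at random across the $k$ machines, so by a standard balls-in-bins argument at least one machine $M_i$ is the home of $\Theta(n/k)$ targets with high probability. Any $\delta$-approximation algorithm must output an estimate of the \pagerank of each target owned by $M_i$, so the output $\cA_i(\bp,r)$ determines $\Theta(n/k)$ bits of $Z$. This establishes Premise~\eqref{eq:precond2} with $\infcost = \tilde{\Omega}(n/k)$.

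Verifying Premise~\eqref{eq:precond1} is the most delicate step. The bit $z_j$ is encoded in the out-degree of $w_j$ (equivalently, in the presence of the edge $(w_j, d_j)$), which $M_i$ sees locally only if $w_j$ or $d_j$ is assigned to it, an event of probability $O(1/k)$ under RVP. Conditioning on the high-probability event that, for essentially all targets $t_j$ owned by $M_i$, neither $w_j$ nor $d_j$ is also owned by $M_i$, the local view $(p_i, r)$ is consistent with nearly every assignment of those bits. A Chernoff bound then gives $\Prob{Z=z \mid p_i, r} \le (1/2)^{\entropy{Z} - o(\infcost)}$ uniformly over feasible $z$. Collecting these high-probability events, together with the error budget $\epsilon = o(1/k) = o(\infcost/\entropy{Z})$ demanded by the theorem, yields a set $\good$ with $|\good| \ge (1-\epsilon - n^{-\Omega(1)})|\prt\times\rand|$ as required.

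Applying \Cref{thm:expectedruntime} with this choice of $\infcost$ then gives $T = \Omega(\infcost/(Bk)) = \Omega(n/(Bk^2\cdot\polylog n))$, matching the claim. The hardest step will be the gadget design: we must simultaneously (i) make the PageRank gap large enough that a $\delta$-approximation at $t_j$ can detect $z_j$, and (ii) ensure the gadgets are ``structurally symmetric'' enough that the PageRanks of the non-target vertices (such as the $d_j$'s) do not leak bits of $Z$ that $M_i$ could compute from its local view alone; without this symmetry, Premise~\eqref{eq:precond1} could fail with more than $o(\infcost)$ slack. The assumption $k = \Omega(\log^2 n)$ will be used precisely to guarantee that the RVP concentration and balls-in-bins bounds hold with the required $1 - n^{-\Omega(1)}$ probability.
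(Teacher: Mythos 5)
Your high-level plan is the right one and matches the paper's: encode a $\Theta(n)$-bit random string $Z$ in edge structure that shifts the \pagerank of designated output vertices by a constant factor, argue via pigeonhole/RVP that some machine is responsible for $\Theta(n/k)$ of those outputs, and instantiate \Cref{thm:expectedruntime} with $\infcost = \Theta(n/k)$. But the gadget you propose does not satisfy Premise~\eqref{eq:precond1}, and the failure is not the one you flag at the end (leakage through the \pagerank values of non-target vertices); it is leakage through the \emph{input partition itself}. In your gadget the bit $z_j$ is the out-degree of $w_j$, and $t_j$ is an out-neighbor of $w_j$. Under RVP the home machine of $w_j$ sees all of $w_j$'s incident edges, hence it sees $z_j$ \emph{and} the ID of the affected target $t_j$ in a single local view. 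Every machine is home to $\Theta(n/k)$ of the $w_j$'s, so every machine learns $\Theta(n/k)$ of the (bit, target-ID) pairs of $Z$ for free from $p_i$ alone. That is $\Theta(\infcost)$ bits, not the $o(\infcost)$ slack that Premise~\eqref{eq:precond1} demands, and the theorem then yields nothing. Your accounting only counts the indices where $M_i$ owns $t_j$ \emph{and additionally} $w_j$ or $d_j$ (correctly $O(n/k^2)$ of them), but omits the $\Theta(n/k)$ indices where $M_i$ owns $w_j$ without owning $t_j$.

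The paper avoids this with two devices your proposal is missing. First, all vertex IDs are drawn uniformly at random, so a vertex's ID carries no information about its position in the abstract graph $H$; $Z$ is then defined as the set of pairs (bit, output-vertex ID), and knowing a bit is useless unless you can also link it to the ID of the corresponding output vertex. Second, the bit-encoding edge (between $x_j$ and $u_j$) is separated from the output vertex $v_j$ by a two-hop path $u_j \to t_j \to v_j$, so that no single vertex's local neighborhood contains both the bit and $v_j$'s ID; forming the link requires owning two specific adjacent vertices (e.g., $x_j$ and $t_j$, or $u_j$ and $v_j$), a probability-$1/k^2$ event, which by a Chernoff bound leaks only $O(n\log n/k^2) = o(n/k)$ pairs per machine (\Cref{lem:matching}). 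Any repair of your gadget must reproduce both features. Two smaller points: the critical machine should be identified by pigeonhole over the \emph{output} lists (some machine outputs $\ge m/4k$ of the required values), not as the home machine of the targets --- the model allows any machine to output any value --- and the paper's choice $\infcost = m/4k$ gives the clean $\Omega(n/(Bk^2))$ bound without the $\polylog$ loss in your final line.
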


We first give a high-level overview of the proof. 
As input graph $G$, we construct a weakly connected directed graph where the direction of certain ``important'' edges
is determined by a random bit vector, and assign random IDs to all the vertices. 
Flipping the direction of an important edge changes the \pagerank of connected vertices by a constant factor and hence any (correct) algorithm needs to know about these edge directions.
It is crucial that the vertex IDs are chosen randomly, to ensure that knowing just the direction of important edges is not sufficient for computing the \pagerank of the adjacent nodes, as these random vertex IDs ``obfuscate the position'' of a vertex in the graph. 
This means that a machine needs to know both, the direction of an important edge and the IDs of the connected vertices to be able to output a correct result.
By using a Chernoff bound, we can show that the random vertex partitioning of the input graph does not reveal too many edge-directions together with the matching vertex IDs to a single machine.
This sets the stage for applying our generic lower bound theorem (\Cref{thm:expectedruntime}) to obtain a lower bound on the run time.

\noindent\textbf{The Lower Bound Graph.}
We consider the following directed graph $H$ (see Figure~\ref{fig:PRlowerbound}) of $n$ vertices and $m=n-1$ edges; for simplicity, assume that $m/4$ is an integer. 
Let $X=\{x_1,x_2,\dots,x_{m/4}\}$, $U=\{u_1,u_2,\dots,u_{m/4}\}$, $T=\{t_1,t_2,\dots,t_{m/4}\}$, $V=\{v_1,v_2,\dots,v_{m/4}\}$,
and let $V(G) = \{ X \cup U \cup T \cup V \cup \{w\} \}$.
The edges between these vertices are given as follows:
For $1 \le i \le {m/4}$, there is a directed edge $u_i \ra t_i$, a directed edge $t_i \ra v_i$, and a directed edge $v_i \ra w$. 
The edges between $u_i$ and $x_i$ (these are the ``important'' edges mentioned above) are determined by a bit vector $\mathbf{b}$ of length $m/4$ where each entry $b_i$ of $\mathbf{b}$ is determined by a fair coin flip:
If $b_i = 0$ then there is an edge $u_i \ra x_i$, otherwise there is an edge $x_i \ra u_i$.
Lemma~\ref{lem:sep} \onlyShort{(in full paper)} shows that, for any $1 \le i \le m/4$ and for any $\epsilon < 1$, there is a constant factor separation
between the \pagerank of any node $v_i$
if we switch the direction of the edge between $x_i$ and $u_i$.

\begin{figure}[h]
    \begin{center}
 \includegraphics[width=0.65\textwidth]{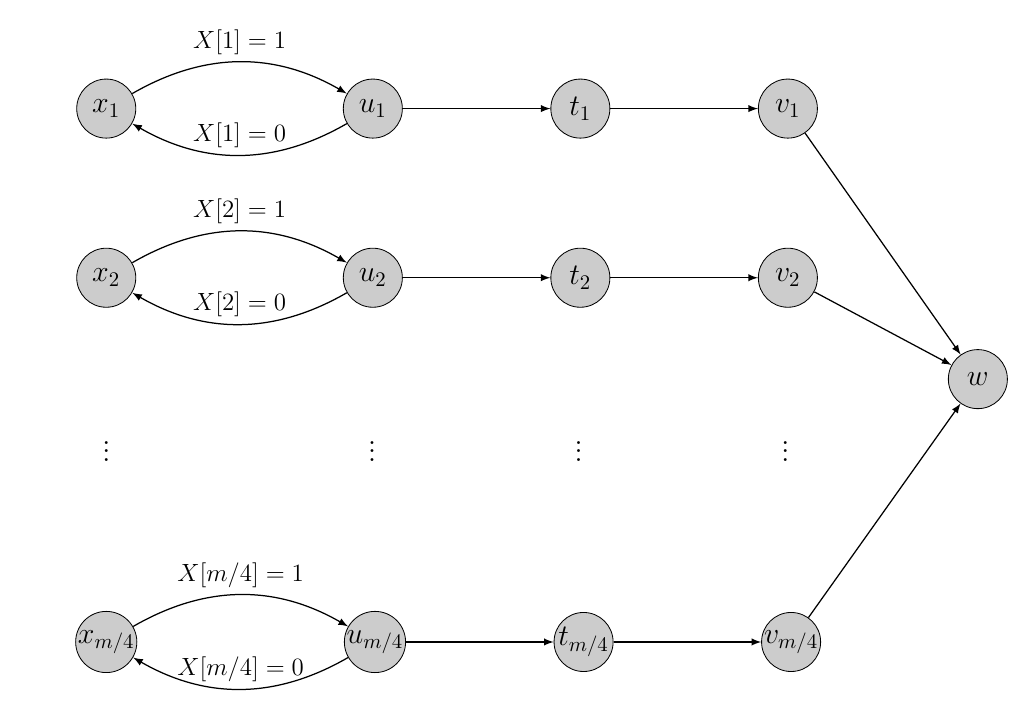}
       \caption{The graph $H$ used to derive a lower bound on the round complexity of PageRank computations.}
       \label{fig:PRlowerbound}
\vspace{-0.25in}
\label{fig:network-ex}
   \end{center}
\end{figure}

\begin{lemma} \label{lem:sep}
The following holds for the $\pagerank$ value of vertices $v_i$ of $G$, for $1 \le i \le {n/4}$:
If $b_i = 0$, then $\pagerank(v_i) = \frac{(2.5 - 2\epsilon + \epsilon^2 /2)\epsilon}{n}$. 
Otherwise, if $b_i = 1$, then $\pagerank(v_i) \geq \frac{(3 - 3\epsilon + \epsilon^2)\epsilon}{n}$.
For any $\epsilon < 1$, there is a constant factor (where the constant depends on $\epsilon$) separation between the two cases.
\end{lemma}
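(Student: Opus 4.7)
I will prove the lemma using the well-known Monte Carlo characterization of PageRank: the PageRank of a vertex $v$ equals the probability that a random walk starting from a uniformly random vertex, with geometrically distributed length (parameter $\epsilon$), terminates at $v$. Concretely,
\[
\pi(v) \;=\; \sum_{L \ge 0} \epsilon(1-\epsilon)^L \cdot \Pr\!\bigl[\text{a length-}L\text{ walk from a uniform start ends at }v\bigr].
\]
This reformulation is ideal for the graph $H$ because its underlying undirected structure is a disjoint union of short paths joined at $w$, so for any target $v_i \ne w$ there are only very few walks that can end at $v_i$, and enumerating them directly gives closed-form expressions for the two cases.

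For a fixed index $i$, observe that the only incoming edge of $v_i$ comes from $t_i$, the only incoming edge of $t_i$ comes from $u_i$, and $u_i$'s neighborhood depends on $b_i$. If $b_i = 0$, then $u_i \to x_i$, so $u_i$ has no incoming edge at all; hence walks ending at $v_i$ can only have length $0, 1,$ or $2$. Enumerating them and noting that $u_i$'s out-degree is $2$ (to $x_i$ and $t_i$) while $t_i, v_i$ have out-degree $1$, I will sum
\[
\pi(v_i) \;=\; \tfrac{\epsilon}{n}\Bigl[1 + (1-\epsilon) + \tfrac{(1-\epsilon)^2}{2}\Bigr] \;=\; \tfrac{(2.5 - 2\epsilon + \epsilon^2/2)\epsilon}{n},
\]
matching the claimed formula. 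If $b_i = 1$, then $x_i \to u_i$, so $u_i$ now has out-degree $1$, and the walk can be extended one more step back, with $x_i$ serving as the unique dead-end source. Walks ending at $v_i$ then have length up to $3$, each with unit transition probabilities, giving
\[
\pi(v_i) \;=\; \tfrac{\epsilon}{n}\Bigl[1 + (1-\epsilon) + (1-\epsilon)^2 + (1-\epsilon)^3\Bigr] \;\ge\; \tfrac{(3 - 3\epsilon + \epsilon^2)\epsilon}{n},
\]
since $1+(1-\epsilon)+(1-\epsilon)^2 = 3 - 3\epsilon + \epsilon^2$ and the remaining $(1-\epsilon)^3$ term is nonnegative.

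Finally, for the constant-factor separation, I will compare the two expressions as a ratio: for any fixed $\epsilon \in (0,1)$,
\[
\frac{3 - 3\epsilon + \epsilon^2}{2.5 - 2\epsilon + \epsilon^2/2} \;>\; 1,
\]
with the gap tending to $6/5$ as $\epsilon \to 0$; a simple calculus check (or direct substitution at the endpoints $\epsilon=0$ and $\epsilon=1$) confirms the ratio stays bounded away from $1$ throughout $(0,1)$. The only subtlety worth flagging is the treatment of sinks: in the $b_i=0$ case $x_i$ is a sink, but the Monte Carlo characterization used above assigns no outgoing mass from a sink (the walk simply cannot be extended), so no normalization correction is needed and the computation is self-consistent. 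This will be the single point I state explicitly, since a different convention (e.g., uniform teleport from sinks) would alter constants; all other steps are mechanical enumeration.
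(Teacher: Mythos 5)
Your proof is correct and follows essentially the same route as the paper: the paper likewise derives the two values by counting the expected number of Monte Carlo random-walk tokens reaching $v_i$ (i.e., summing $1 + (1-\epsilon) + \tfrac{(1-\epsilon)^2}{2}$ versus $1 + (1-\epsilon) + (1-\epsilon)^2 + (1-\epsilon)^3$ over the at most four possible walk origins $v_i, t_i, u_i, x_i$), which is the same enumeration you perform. One small imprecision: the ratio $\tfrac{3-3\epsilon+\epsilon^2}{2.5-2\epsilon+\epsilon^2/2} = 1 + \tfrac{(1-\epsilon)^2/2}{2.5-2\epsilon+\epsilon^2/2}$ equals $1$ at $\epsilon=1$, so it is not bounded away from $1$ uniformly on $(0,1)$; this is harmless here because the lemma fixes $\epsilon<1$ and explicitly lets the separation constant depend on $\epsilon$.
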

\onlyLong{

\begin{proof}
We will determine an estimate of $\pagerank(v_i)$ using the distributed random walk approach described
at the beginning of Section~\ref{sec:pralgo}, whereby the expected number of random walk tokens addressed to one node, multiplied by
$\epsilon/cn\log n$, gives a high-probability estimate of the PageRank value of the node. The expected value of $\psi_{v_i}$ is
\[
\expect{\psi_{v_i} | b_i = 0}  = c \log n \mleft(1 + (1-\epsilon) + \frac{(1-\epsilon)^2}{2}\mright)
\]
and
\[
\expect{\psi_{v_i} | b_i = 1} = c \log n \mleft(1 + (1-\epsilon) + (1-\epsilon)^2 + (1-\epsilon)^3\mright).
\]
Therefore,
\[
\pagerank(v_i) = \frac{(2.5 - 2\epsilon + \epsilon^2 /2)\epsilon}{n}
\]
if $b_i = 0$, and
\[
\pagerank(v_i) \geq \frac{(3 - 3\epsilon + \epsilon^2)\epsilon}{n}
\]
 if $b_i = 1$.
\end{proof}
}

\noindent\textbf{The Input Graph Distribution.} We now build our input graph $G$ as follows.
Let $m=n-1$, and let $ID$ be the random variable representing a set of $n$ unique integers chosen uniformly at random
from $\{ S \subset [1,\poly(n)]\colon |S|=n \}$.
Assigning each vertex of $H$ a unique integer from $ID$ (in an arbitrary predetermined way) yields a graph $G$.
Let $\cal G$ denote the set of graphs $G$ determined by all possible (different) ID assignments to all possible instances of $H$ considering all possible edge directions.
Let $\prt$ be the set of all input graph partitions (i.e., the set of all graphs in $\cal G$ and all their possible input partitions) among the $k$ machines, and let $\rand$ be the set of all random strings used by a given $\pagerank$ algorithm $\cA$.  
Let $\bal \subseteq \prt$ be the set of all input partitions where each machine receives $\tilde\Theta(n/k)$ vertices of the input graph.
Note that $(\bp,r) \in \prt\times\rand$ fully determines the run of $\cA$.
We assume that each machine $M_i$ outputs a set $\{(\pi_1,id_1),\dots,(\pi_\ell,id_{\ell})\}$,
where $\pi_j$ refers to the \pagerank value of the vertex with ID $id_j$.
Note that we do not make assumptions neither on which machine being the one that outputs the \pagerank of a specific vertex
$v$ (which could be a machine that holds no initial knowledge about $v$ and its ID), nor on the individual sizes of these output sets.

\noindent\textbf{Discovering Weakly Connected Paths of Vertices.}
By the random vertex partitioning, each machine $M_i$ initially holds $\tilde\Theta(n/k)$ vertices in total.
More specifically, $M_i$ receives random sets $X_i \subseteq X$, $U_i \subseteq U$, $T_i \subseteq T$, and $V_i \subseteq V$, each containing $O(n\log (n)/k)$ vertices.
As machine $M_i$ also gets to know the incident edges of these vertices, $M_i$ can locally check if a path induced by some $(x_{j_1},u_{j_2},t_{j_3},v_{j_4}) \in X_i \times U_i \times T_i \times V_i$ is weakly connected, i.e., $j_1 = \cdots = j_4$.
Since $M_i$ learns the output pair $(\pagerank(v),id_v)$ at zero cost, we upper bound the number of such paths that the machines learn initially by using a Chernoff bound.%
\onlyShort{ That is, we show that each machine learns at most $O\left(\frac{n\log n}{k^{3/2}}\right)$ paths.}
\begin{lemma} \label{lem:matching}
With probability at least $1 - n^{-4}$, the initial graph partition reveals at most $O\left(\frac{n\log n}{k^{2}}\right)$ weakly connected paths between vertices in $X$ and $V$ %
to every machine. 
\end{lemma}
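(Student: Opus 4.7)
The plan is to apply a Chernoff bound separately for each machine and then union bound over the $k$ machines. Fix an arbitrary machine $M_j$. For each arm index $i\in\{1,\dots,m/4\}$, let $W_i$ be the indicator of the event that $M_j$ can reconstruct the weakly connected path $(x_i,u_i,t_i,v_i)$ purely from its initial input. Because $M_j$ knows only the IDs and incident (directed) edges of its locally assigned vertices, identifying this path requires its local share of $\{x_i,u_i,t_i,v_i\}$ to induce an adjacency fragment connecting $x_i$ to $v_i$; a short case analysis shows that this happens exactly when $M_j$ holds at least one of the three pairs $\{x_i,t_i\}$, $\{u_i,v_i\}$, or $\{u_i,t_i\}$ (or any superset thereof). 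The remaining pairs do not suffice: $\{x_i,u_i\}$ gives only $\{x_i,u_i,t_i\}$ (missing $v_i$), $\{t_i,v_i\}$ gives only $\{u_i,t_i,v_i,w\}$ (missing $x_i$), and $\{x_i,v_i\}$ gives two adjacency pieces with no shared vertex through which to merge them.

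Under random vertex partitioning, each vertex is assigned to $M_j$ independently with probability $1/k$, so by a union bound over the three favorable pairs, $\Pr[W_i=1]\le 3/k^2$. Crucially, the arms of $H$ are pairwise vertex-disjoint, so the indicators $W_1,\dots,W_{m/4}$ are mutually independent, and the total count $W := \sum_i W_i$ has expectation $\mu := \mathbb{E}[W]\le 3m/(4k^2)=O(n/k^2)$. A standard multiplicative Chernoff bound then gives $\Pr\bigl[W \ge \alpha(\mu + \log n)\bigr] \le n^{-5}$ for a suitably large constant $\alpha$, where the additive $\log n$ is needed to absorb the regime $\mu=O(\log n)$ (in which the purely multiplicative form of Chernoff would not by itself give polynomial decay).

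Taking a union bound over the $k\le n$ machines, I obtain that with probability at least $1 - kn^{-5} \ge 1 - n^{-4}$ every machine identifies at most $O(\mu + \log n) = O(n\log n / k^2)$ weakly connected $X$-to-$V$ paths, which is precisely the statement of the lemma.

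The only mildly subtle point is the case analysis underlying the bound $\Pr[W_i=1]\le 3/k^2$: one must carefully verify that having the pair $\{x_i,v_i\}$, or either of the ``one-sided'' pairs $\{x_i,u_i\}$ and $\{t_i,v_i\}$, leaves $M_j$ with adjacency fragments that cannot be stitched into a full $x_i$-to-$v_i$ path through local ID matching. Once this is in hand, independence across arms (guaranteed by the disjoint support of the $W_i$) reduces everything to a routine concentration-plus-union-bound calculation.
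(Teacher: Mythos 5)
Your proof is correct and follows essentially the same route as the paper's: a per-arm indicator of whether a machine's initial share reveals the weakly connected path, a bound of $O(1/k^2)$ on each indicator coming from the constant number of vertex pairs that could reveal it, independence across the vertex-disjoint arms, a Chernoff bound per machine, and a union bound over the $k$ machines. Your case analysis is in fact slightly more complete than the paper's, which lists only the revealing pairs $\{x_j,t_j\}$ and $\{u_j,v_j\}$ and omits $\{u_j,t_j\}$; this changes nothing asymptotically.
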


\begin{proof}
Fix one machine $M_i$. %
If a vertex is assigned to $M_i$, then machine $M_i$ knows its incident edges and the IDs of their endpoints.
Therefore, $M_i$ can discover a weakly connected path (between $X$ and $V$) in one of the following ways:
(1) $M_i$ obtains $x_{j} \in X$ and $t_{j} \in T$;
(2) $M_i$ obtains $u_{j} \in U$ and $v_{j} \in V$.
The argument is similar in both cases and hence we focus on (1) for the rest of this proof.
By the random vertex partition process, the probability that $x_j$ and $t_{j}$ both are assigned to machine $M_i$ is $\frac{1}{k^2}$.
Since all vertices are assigned independently at random, a standard Chernoff bound shows that with high probability $O(n \log n / k^2)$ matching vertex pairs $(x_j,t_j)$ are assigned to machine $M_i$.
Applying the union bound over the $k$ machines completes the proof.
\end{proof}

\noindent\textbf{Good Inputs.} We define $\good \subseteq \bal\times\rand$ to be the set of all (balanced) inputs and random strings where 
(1) $\cA$ correctly outputs the $\pagerank$ of each vertex, (2) partition $\textbf{p}$ is ``balanced'', i.e., each machine is assigned
$O(n\log n /k)$ vertices (and hence $O(n \log n /k)$ edges since $m=O(n)$), and (3) the partitioning is such that each machine
knows at most $O((n \log n)/k^{2})$ weakly connected paths initially; we define $\bad = \prt\times\rand\setminus \good$.

\begin{lemma} \label{lem:goodPageRank}
(A) For any $(\bp,r) \in \good$, algorithm $\cA$ is correct and there must be at least one machine $M_i$ whose output list contains $\Omega(n/k)$ vertices of $V$.
(B) $|\good| \ge \left(1 - o(1/k) - n^{-\Omega(1)}\right)|\prt\times\rand|.$
\end{lemma}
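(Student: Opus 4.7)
For part (A), I would start from the observation that for any $(\bp,r)\in\good$ the algorithm is correct by definition, so every vertex's PageRank value must appear (paired with the correct ID) in the output of at least one machine. Since the set $V = \{v_1,\dots,v_{m/4}\}$ has $\Theta(n)$ vertices and there are only $k$ machines, a straightforward averaging (pigeonhole) argument gives that at least one machine's output list must contain PageRank values for $\Omega(n/k)$ distinct vertices in $V$. The correctness of $\cA$ on $\good$-inputs is immediate from condition (1) of the definition of $\good$.

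For part (B), the plan is to bound the probability mass (equivalently, cardinality under the uniform measure on $\prt\times\rand$) of $\bad = (\prt\times\rand)\setminus\good$ by a union bound over the three failure modes implicit in conditions (1)--(3) of the definition of $\good$. Specifically, I would define $E_1$ as the event that $\cA$ errs, $E_2$ as the event that the random vertex partition fails to place $O(n\log n /k)$ vertices on every machine, and $E_3$ as the event that some machine initially sees more than $O((n\log n)/k^2)$ weakly connected paths. By the premise of \Cref{thm:pageranklb} we have $\Prob{E_1} = o(1/k)$; a standard Chernoff bound applied to the random vertex partition (already invoked in the discussion following the construction of $G$) gives $\Prob{E_2} \le n^{-\Omega(1)}$; and \Cref{lem:matching} directly yields $\Prob{E_3} \le n^{-4}$. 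A union bound then gives $\Prob{\bad} \le \Prob{E_1} + \Prob{E_2} + \Prob{E_3} \le o(1/k) + n^{-\Omega(1)}$, so
\[
  |\good| = |\prt\times\rand|\cdot\Prob{\good} \ge \bigl(1 - o(1/k) - n^{-\Omega(1)}\bigr)\,|\prt\times\rand|,
\]
which is the desired bound.

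I do not anticipate a serious obstacle here: both parts are essentially bookkeeping that assemble previously stated facts. The only subtlety worth being careful about is making sure that the three conditions defining $\good$ are being treated as events over the product measure on $(\bp,r)$ (partitions are uniform on $\prt$, random strings are uniform on $\rand$, and the two are independent), so that the union bound in part (B) is legitimate and the $n^{-\Omega(1)}$ slack from $E_2$ and $E_3$ is absorbed into a single $n^{-\Omega(1)}$ term.
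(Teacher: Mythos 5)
Your proposal is correct and follows essentially the same route as the paper: part (A) is the definition of $\good$ plus pigeonhole over the $k$ machines, and part (B) is a union bound over the failure events under the uniform measure on $\prt\times\rand$. In fact your write-up is slightly more complete than the paper's own terse proof, which does not spell out the pigeonhole step for (A) and only explicitly accounts for the correctness and balance events in (B), leaving the weakly-connected-paths event (your $E_3$, from \Cref{lem:matching}) implicit in the $n^{-\Omega(1)}$ term.
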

\onlyLong{
\begin{proof}
Part~(A) follows directly from the definition of set $\good$.
For (B), note that $\cA$ succeeds with probability $\ge 1 - o(1/k)$.
Moreover, the random vertex partitioning ensures that each machine receives $\tilde \Theta(n\log(n)/k)$ vertices with probability $\ge 1 - n^{-4}$.
Hence, the above is true for at least a $\left(1 - o(1/k) - n^{-4}\right)$-fraction of the possible graph partition and random string pairs in $\prt\times\rand$.
\end{proof}
}

To instantiate \Cref{thm:expectedruntime}, we show in \Cref{lem:pagerankEntropyLB} and \Cref{lem:pagerankEntropyUB} that we can satisfy the Premises~\eqref{eq:precond1} and \eqref{eq:precond2}, by setting $\infcost = m/4k = \Theta(n/k)$.  Plugging the above value of $\infcost$ in \eqref{eq:lb} then gives the claimed lower bound.

\begin{lemma} \label{lem:pagerankEntropyLB}
  Let $Z$ be the random variable representing the set of pairs $\{(b_1,v_1),\dots,(b_{m/4},v_{m/4})\}$, where $b_j$ refers to the direction of the edge $(x_j,u_j)$ in the weakly connected path $(x_j,u_j,t_j,v_j)$ of the input graph of \Cref{fig:PRlowerbound}.
  Then, for each $(\bp,r) \in \good$, where $\bp=(p_1,\dots,p_k)$, and for every possible choice of $z$, it holds that
  $\Prob{ Z = z \mid p_i, r } \le 2^{-\left({m}/{4} - O(n \log(n)/k^2)\right)}$.
\end{lemma}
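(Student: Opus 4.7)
I plan to bound $\Prob{Z=z\mid p_i,r}$ by separating the randomness into the fair coin flips $\{b_j\}$ and the uniformly random identifier assignment, then showing that each non-revealed path contributes enough residual entropy. Since the public random string $r$ is drawn independently of the input graph, $\Prob{Z=z\mid p_i,r}=\Prob{Z=z\mid p_i}$, and so I work with the latter.

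By the definition of $\good$ combined with \Cref{lem:matching}, the partition $p_i$ reveals at most $K=O(n\log(n)/k^2)$ weakly connected paths; call this set $R$. For every $j\in R$ the pair $(b_j,v_j)$ is determined by $p_i$, so either $z$ agrees on all $K$ revealed pairs or the claimed bound holds vacuously with probability $0$. I partition the remaining $m/4-K$ non-revealed paths into three classes according to which component of the pair is revealed by $p_i$: $S'$ (bit $b_j$ revealed via $x_j\in p_i$ or $u_j\in p_i$, but identifier of $v_j$ not revealed), $T'$ (identifier of $v_j$ revealed via $v_j\in p_i$ or $t_j\in p_i$, but bit not revealed), and $N$ (neither component revealed). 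Let $|V'|=|S'|+|N|$ denote the number of non-revealed $v$-positions and $N'=n-|p_i|$ the size of the unrevealed identifier pool.

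The conditional probability factors as an identifier contribution times a bit contribution. The identifier contribution bounds the probability that the uniformly random bijection from $ID$ to vertex positions maps the $|V'|$ unrevealed identifiers appearing in $z$ exactly onto the $|V'|$ non-revealed $v$-positions, giving at most $\binom{N'}{|V'|}^{-1}$. The bit contribution bounds the probability that the $|T'|+|N|$ coin flips that remain uniform given $p_i$ match the bits specified by $z$, giving at most $(1/2)^{|T'|+|N|}$. Using the balance condition $|p_i|=\tilde O(n/k)$ with $k=\Omega(\log^2 n)$, I have $N'=n(1-o(1))$ while $|V'|\le m/4\le n/4$, so $N'\ge 4|V'|$ and the elementary estimate $\binom{N'}{|V'|}\ge ((N'-|V'|+1)/|V'|)^{|V'|}\ge 3^{|V'|}$ makes the identifier factor at most $(1/2)^{|V'|}=(1/2)^{|S'|+|N|}$. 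Multiplying yields
\[\Prob{Z=z\mid p_i}\le (1/2)^{|S'|+|T'|+2|N|}\le (1/2)^{|S'|+|T'|+|N|}=2^{-(m/4-K)},\]
as required.

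The delicate step is the subcase $S'$: the bit $b_j$ is individually revealed to $M_i$ but cannot be paired with a specific $v_j$, so no direct per-path factor of $1/2$ is available from the bit side. The entropy compensating for these bits must instead be supplied by the uniformly random identifier assignment; quantifying this compensation -- and verifying that the binomial factor indeed carries a $2^{|S'|}$ reduction which combines multiplicatively with the bit factor to recover the full $2^{-(m/4-K)}$ bound across all partitions $S'$, $T'$, $N$ -- is the main obstacle.
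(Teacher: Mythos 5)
Your proposal follows the same skeleton as the paper's proof: use \Cref{lem:matching} to cap at $O(n\log(n)/k^2)$ the number of paths whose pair $(b_j,v_j)$ is fully determined by $p_i$, and then argue that each remaining path retains at least one bit of residual uncertainty, so that any fixed $z$ has conditional probability at most $2^{-(m/4-O(n\log(n)/k^2))}$. Where you genuinely diverge is in how that residual bit per path is certified. The paper's proof simply asserts that every ``partially known'' path admits two equally likely completions $(0,v_j)$ and $(1,v_j)$; this is exactly the point you flag as delicate, since for a path in your class $S'$ the bit $b_j$ is already fixed by $p_i$ and the two completions differ only in which ID plays the role of $v_j$. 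Your explicit factorization into a bit contribution $(1/2)^{|T'|+|N|}$ and an identifier contribution, bounded through a binomial/falling-factorial estimate on the random ID assignment, makes precise what the paper leaves implicit, and is the more defensible version of the argument.

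Two bookkeeping points need patching, neither of which changes the conclusion. First, your four classes $R,S',T',N$ are not exhaustive: a path can have its bit and the ID of $v_j$ both individually present in $p_i$ without being linked (e.g.\ $x_j,v_j\in p_i$ but $u_j,t_j\notin p_i$), and such paths fall outside the two cases counted in the proof of \Cref{lem:matching}. The same Chernoff argument shows there are only $O(n\log(n)/k^2)$ of them with high probability, so you may simply concede them to $R$. Second, the identifier pool is not always of size $N'=n-|p_i|$: machine $M_i$ also learns the IDs of the neighbors of its vertices, and in particular if $w\in p_i$ it learns the entire set $\{id_{v_1},\dots,id_{v_{m/4}}\}$, so the residual randomness on the $|V'|$ unpinned $v$-positions is a uniform injection into a candidate pool whose size may be as small as $|V'|$. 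The bound survives because the number of consistent injections of $|V'|$ IDs from a pool of size at least $|V'|$ is at least $|V'|!\ge 2^{|V'|}$ (for $|V'|\ge 4$, with only an additive constant lost otherwise), but the identifier factor should be stated as a falling factorial over the actual candidate pool determined by $p_i$ rather than as $\binom{N'}{|V'|}^{-1}$ with $N'=n-|p_i|$.
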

\onlyLong{
\begin{proof}
Consider a $(\bp,r) \in \good$ where $\bp = (p_1,\dots,p_i,\dots,p_k)$. By \Cref{lem:goodPageRank}(A), that algorithm $\cA$ correctly computes the PageRank and some machine (without loss of generality) $M_i$ outputs at least $\Omega(n/k)$ PageRank values. 

By \Cref{lem:sep}, we know that algorithm $\cA$ can only correctly output $\pagerank(v_j)$ at machine $M_i$ if $M_i$ knows the direction of the edge between $u_j$ and $x_j$ (from \Cref{lem:sep}, since the direction of the corresponding edge can be derived from the $\pagerank$ value).
This means that if machine $M_i$ outputs the \pagerank\ for $v_j$ as a pair $(\pi_j,v_j)$, then it can reconstruct the pair $(b_j,v_j)$, for any $1 \le j \le m/4$.

Since $(\bp,r) \in \good$, it follows by \Cref{lem:matching} that each machine $M_i$ learns at most $\eta = O(n\log (n)/k^{2})$ output entries of $V$ for free by inspecting its assigned input.
In addition to these $\eta$ entries, $M_i$ might know partial information about the remaining $\Omega(n)-\eta$ pairs. 

It follows that, for each of the other weakly connected paths that are not concerned with its $\eta$ already known $\pagerank$ values, $M_i$ either has initial knowledge of the index $\ell$ of the respective vertex $v_\ell \in V_i$, or it knows the edge direction $b_\ell$ between $x_\ell$ and $u_\ell$, but not both. 
Notice that knowledge of the vertex ID of $v_\ell$ reveals no additional information about the index $\ell$ since we choose vertex IDs uniformly at random.
We refer to these paths as being \emph{partially known to $M_i$}.

It follows that, for each index $j$ for which the path is partially known to $M_i$, there are two possibilities $(0,v_j)$ and $(1,v_j)$, each of which is equally likely, according to the input distribution.

Therefore, taking into account the initial input assignment, we still have at least $2^{m/4 - O(n\log(n)/k^{2})}$ possible choices for $z$, i.e., the output of $M_i$ concerning vertices in $V$, each of which is equally likely without conditioning on further knowledge.
Thus, 
\[
 \Prob{Z = z \mid  p_i,r} \le 2^{-(m/4 - O(n\log (n)/k^{2}))},
\]
completing the proof of the lemma.
\end{proof}
}

\begin{lemma} \label{lem:pagerankEntropyUB}
For each $(\bp,r) \in \good$, where $\bp=(p_1,\dots,p_k)$, there exists a machine $M_i$ 
with output $\cA_i(\bp,r)$ such that, for every choice of $z$ for $Z$ (defined in \Cref{lem:pagerankEntropyLB}) that has nonzero probability conditioned on $\cA_i(\bp,r),p_i,r$, it holds that
$
  \Prob{ Z = z \mid \cA_i(\bp,r),p_i, r } \ge 1/2^{\frac{m}{4} - \frac{m}{4k}}.
$
\end{lemma}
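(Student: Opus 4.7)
The plan is to identify, for each good pair $(\bp,r)$, a single machine $M_i$ whose output encodes enough of the bit vector $\mathbf{b}$ to keep the conditional support of $Z$ small, and then to invoke uniformity of the prior to conclude that every $z$ in that support is heavy.

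First I would pick $M_i$ by a pigeonhole argument. For $(\bp,r) \in \good$, \Cref{lem:goodPageRank}(A) guarantees that the algorithm correctly outputs the \pagerank of every one of the $m/4$ vertices in $V$, so by averaging across the $k$ machines, at least one machine $M_i$ lists \pagerank values for at least $m/(4k)$ distinct vertices of $V$. Next I would invoke \Cref{lem:sep}: since the two possible values of $\pagerank(v_j)$ differ by a constant factor depending on $b_j$, any $\delta$-approximate value (for $\delta$ small enough compared to this separation) allows one to decode $b_j$. Together with the ID $v_j$ that $M_i$ explicitly lists in its output, this recovers the pair $(b_j, v_j) \in Z$. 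Hence conditioning on $(\cA_i(\bp,r), p_i, r)$ pins down at least $m/(4k)$ of the $m/4$ pairs composing $Z$.

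The final step exploits the uniform prior on $\mathbf{b}$. Since $\mathbf{b}$ is drawn uniformly from $\{0,1\}^{m/4}$ with independent coordinates, and since the triple $(p_i, r, \cA_i(\bp,r))$ places no constraint on the $m/4 - m/(4k)$ coordinates of $\mathbf{b}$ that are \emph{not} identified in the previous step, the posterior distribution of $Z$ given $(\cA_i(\bp,r), p_i, r)$ is uniform over its support, which has cardinality at most $2^{m/4 - m/(4k)}$. Consequently, every $z$ with nonzero conditional probability satisfies $\Prob{Z = z \mid \cA_i(\bp,r), p_i, r} \ge 1/2^{m/4 - m/(4k)}$, matching the claim.

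The most delicate point I expect is verifying that the $m/(4k)$ pairs extracted from $M_i$'s output correspond to $m/(4k)$ \emph{distinct} bits of $\mathbf{b}$ rather than repeatedly revealing the same bit; this is where the distinctness of output IDs within $V$ and the correctness guarantee on $\good$ inputs are crucial. A secondary subtlety is confirming that no extra information about the remaining bits leaks through the structure of $M_i$'s output (for instance, through correlations among \pagerank values on different paths), but since the $m/4$ paths are mutually vertex-disjoint apart from the common sink $w$ and the corresponding bits are a priori independent, no such cross-leakage can occur.
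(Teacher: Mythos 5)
Your proposal is correct and follows essentially the same route as the paper's proof: use the guarantee from \Cref{lem:goodPageRank} to find a machine outputting at least $m/(4k)$ PageRank values of vertices in $V$, decode the corresponding bits $b_j$ via the constant-factor separation of \Cref{lem:sep}, and conclude that at most $\frac{m}{4} - \frac{m}{4k}$ pairs remain undetermined, so each surviving $z$ has posterior probability at least $2^{-(m/4 - m/(4k))}$. The two subtleties you flag (distinctness of the recovered bits and absence of cross-leakage across paths) are left implicit in the paper, and your uniformity claim for the posterior plays the same role as the paper's remark that the bound is minimized when the remaining choices are equally likely.
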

\onlyLong{
\begin{proof}
By \Cref{lem:goodPageRank}, we know that there is a machine $M_i$ that outputs at least $m/4k$ $\pagerank$ values of vertices in $V$. 
Let $\lambda$ be the total number of pairs $(b_j,v_j)$, where $b_j$ is the direction of the edge $(x_j,u_j)$ in the weakly connected path $(x_j,u_j,t_j,v_j)$ (cf.\ \Cref{lem:pagerankEntropyLB}) that remain unknown to machine $M_i$ conditioned on its input $p_i$, random string $r$, and its output $o_i$.

Observing that the size of its output $o_i$ is $\ge m/4k$ and the fact that we can recover the pair $(b_j,v_j)$ if $M_i$ outputs the \pagerank\ of $v_j$ (see proof of \Cref{lem:pagerankEntropyLB}), it follows that $\lambda \le \frac{m}{4} - \frac{m}{4k}$, and thus there are $2^{\frac{m}{4} - \frac{m}{4k}}$ distinct choices for $z$.
The probability bound is minimized if each remaining possible choices of $z$ are equally likely. 
This implies that $\Prob{Z \mid o_i,p_i,r} \ge 1/2^{\frac{m}{4} - \frac{m}{4k}}$, as desired.
\end{proof}
}

\onlyShort{\vspace{-0.1in}}
\subsection{A Lower Bound for Triangle Enumeration}\label{sec:trianglelb}
\onlyShort{\vspace{-0.05in}}

We first give a high-level overview of the proof.
The input graphs that we use for our lower bounds are sampled according to the $G_{n,1/2}$ Erd\"os-Renyi random graph model.
We will argue that enumerating triangles implies a large reduction of the entropy of the characteristic vector of edges $Z$, i.e.,
$Z$ is a bit vector whose entries reflect the presence/absence of an edge in the input graph. 
We prove that initially the machines do not have significant knowledge of $Z$, which is equivalent to having a small probability for the event $\{Z=z\}$, for any $z$.
Then, we show that any machine that outputs $t/k$ triangles, for a parameter $t$, must have reduced its
uncertainty about $Z$ by approximately $(t/k)^{2/3}$ bits. 
In other words, the information obtained by such a machine throughout the course of the algorithm is high.
We  apply \Cref{thm:expectedruntime} to obtain a lower bound on the run time of any algorithm. This yields the following result.

\begin{theorem} \label{thm:trianglelb}
There exists a class of graphs $\cG$ of $n$ nodes for which every distributed algorithm that solves triangle enumeration
in the $k$-machine model has a time complexity of 
$\Omega\left(\frac{n^{2}}{B \cdot k^{5/3}}\right)$, assuming a link bandwidth of $B$ bits per round, $k=\Omega(\log n)$ machines, and an error probability of $\epsilon = o(k^{-2/3})$.
This holds even when the input graph is assigned to the machines via random vertex partitioning. 
\end{theorem}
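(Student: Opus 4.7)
The plan is to apply the General Lower Bound Theorem (\Cref{thm:expectedruntime}) with input graphs drawn from the Erd\H{o}s--R\'enyi distribution $G_{n,1/2}$, taking $Z$ to be the characteristic bit vector of edges, so that $H(Z) = \binom{n}{2}$ and all bits are independent Bernoulli$(1/2)$. Since $G_{n,1/2}$ contains $\Theta(n^3)$ triangles with high probability, any correct enumeration algorithm forces some machine to output $\tau = \Omega(n^3/k)$ triangles. The target is to establish information cost $\infcost = \Theta(n^2/k^{2/3})$, which upon substitution in $T = \Omega(\infcost/(Bk))$ yields the claimed $\Omega(n^2/(Bk^{5/3}))$ bound.

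First, I define the good set $\good \subseteq \prt\times\rand$ to consist of those $(\bp,r)$ for which (i) $\cA$ is correct, (ii) the random vertex partition is balanced (each machine hosts $\tilde O(n/k)$ vertices and thus knows the status of $O(n^2/k)$ potential edges), and (iii) some machine $M_i$ outputs at least $\tau = \Omega(n^3/k)$ triangles by pigeonhole on the $\Theta(n^3)$ triangles of the graph. The correctness condition costs $\epsilon = o(k^{-2/3}) = o(\infcost/H(Z))$, and balance holds with probability $1 - n^{-\Omega(1)}$ by Chernoff, so $|\good|/|\prt\times\rand| \ge 1-\epsilon-n^{-\Omega(1)}$ as required by the theorem.

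For Premise~\eqref{eq:precond1}, the input $p_i$ only reveals the $O(n^2/k)$ bits of $Z$ corresponding to edges incident to $V(p_i)$; the remaining bits are independent and uniform, so every $z$ consistent with $p_i$ satisfies $\Prob{Z=z \mid p_i, r} = 2^{-(H(Z) - \tilde O(n^2/k))}$. Since $\tilde O(n^2/k) = o(n^2/k^{2/3}) = o(\infcost)$ in the regime $k = \Omega(\log n)$, this is of the required form $2^{-(H(Z) - o(\infcost))}$.

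The heart of the argument is Premise~\eqref{eq:precond2}, where a Kruskal--Katona-type combinatorial estimate is needed. A machine can only safely enumerate a triangle after learning the presence of all three of its edges. Let $E^\star$ denote the union of edges appearing in the $\tau$ triangles output by $M_i$; since any graph on $m$ edges contains at most $O(m^{3/2})$ triangles, we obtain $|E^\star| \ge \Omega(\tau^{2/3}) = \Omega(n^2/k^{2/3})$. At most $O(n^2/k)$ of these edges can be incident to $V(p_i)$ and hence already fixed by $p_i$, so the output $\cA_i(\bp,r)$ together with $p_i$ determines $\Omega(n^2/k^{2/3}) - O(n^2/k) = \Omega(n^2/k^{2/3})$ additional bits of $Z$, yielding $\Prob{Z=z \mid \cA_i(\bp,r), p_i, r} \ge 2^{-(H(Z) - \infcost)}$ with $\infcost = \Omega(n^2/k^{2/3})$. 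The main obstacle is this combinatorial bookkeeping: we must argue that the edges ``revealed'' by the output really contribute genuinely new bits of $Z$ beyond those already known from $p_i$, which works precisely because $n^2/k$ is a strictly lower-order term than $(n^3/k)^{2/3}$. Substituting $\infcost = \Theta(n^2/k^{2/3})$ into the conclusion of \Cref{thm:expectedruntime} gives $T = \Omega(\infcost/(Bk)) = \Omega(n^2/(Bk^{5/3}))$, as claimed.
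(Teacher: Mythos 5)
Your proposal is correct and follows essentially the same route as the paper: inputs drawn from $G_{n,1/2}$, $Z$ the characteristic edge vector, the same good-set conditions, Premise~\eqref{eq:precond1} from the $\tilde O(n^2/k)$ edges revealed by the balanced partition, and the Kruskal--Katona-type bound that $\ell$ triangles require $\Omega(\ell^{2/3})$ distinct edges to certify Premise~\eqref{eq:precond2} with $\infcost = \Theta(n^2/k^{2/3})$. The only (harmless) difference is bookkeeping in Premise~\eqref{eq:precond2}: you subtract the $\tilde O(n^2/k)$ initially known edges directly from $|E^\star|$, whereas the paper first discards the $t_3=\tilde O((n^2/k)^{3/2})$ locally determined triangles before applying the edge bound; both yield the same asymptotics.
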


\paragraph{The Input Graph Distribution}
We choose our input graphs according to the Erd\"os-Renyi random graph model $G_{n,1/2}$,
which samples an $n$-node graph where each possible edge is included independently with probability $1/2$.
We use $\prt$ to denote the set of all possible partitions of all possible sampled $n$-node graphs and, similarly to before, denote the set of all random strings used by the algorithm by $\rand$.

Let $Z$ be the characteristic vector of the edges\footnote{The characteristic vector specifies the graph $G$. Order the ${n \choose 2}$ possible edges in some fixed ordering; if the $j$th edge in this ordering appears
in $G$, then $Z_j = 1$, otherwise it is 0.} of the input graph $G$.
Note that the execution of $\cA$ is fully determined by the given graph input partition $\textbf{p}=(p_1,\dots,p_k) \in \prt$ and the shared (among all machines) random bit string $r \in \rand$, where $\rand$ is the set of all possible strings that are used as random bit string by the algorithm.
Hence we have $|\prt \times \rand|$ possible outcomes when running $\cA$ on a graph sampled from $\cG$.

\paragraph{Good Inputs} \label{def:goodinputs} We define $\good \subseteq \prt \times \rand $ to be the set of input pairs $(\bp,r)$ such that (1) $\cA$ performs correctly for the graph partition $\textbf{p}$ of graph $G$ and the random string $r$, (2) partition $\textbf{p}$ is ``balanced'', i.e., each machine is assigned $O(n\log(n) /k)$ vertices (and hence $O(n^2 \log(n) /k)$ edges), and (3) $G$ has $\ge t$ triangles, for some fixed $t = \Theta({n \choose 3})$.

\onlyLong{
\begin{lemma}[Good Inputs] \label{lem:good}
(A) For every $(\bp,r) \in \good$, at least one machine outputs $\ge t/k$ triangles when executing algorithm $\cA$ with $(\bp,r)$, and
(B) $|\good| \ge (1-\epsilon')|\prt\times\rand|$, where $\epsilon' = \epsilon - n^{-\Omega(1)}$.
\end{lemma}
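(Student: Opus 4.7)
The plan is to prove the two parts separately, each by a short direct argument that combines the definition of $\good$ with a pigeonhole step for (A) and a union bound of failure events for (B).

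For part (A), I would argue as follows. Fix any $(\bp,r) \in \good$. By condition (1) in the definition of $\good$, algorithm $\cA$ correctly enumerates all triangles of the input graph $G$ associated with $(\bp,r)$, meaning that for every triangle of $G$ there is at least one machine that includes it in its output. By condition (3), $G$ contains at least $t$ triangles. Since there are only $k$ machines and every triangle must be reported by some machine, the pigeonhole principle guarantees that at least one machine outputs at least $t/k$ triangles. (If one wished to be careful about double-counting, one could just note that the total number of reported triangles over all machines is $\geq t$, so some machine reports $\geq t/k$.)

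For part (B), I would bound the complement $\bad = \prt \times \rand \setminus \good$ by a union over the three failure modes corresponding to the three defining conditions of $\good$. First, because $\cA$ is an $\epsilon$-error algorithm, the fraction of pairs $(\bp,r)$ for which $\cA$ fails is at most $\epsilon$. Second, the random vertex partition is balanced with high probability: a standard Chernoff argument (as already used in this paper for random assignments) gives that the number of vertices assigned to any single machine is $O(n \log(n)/k)$ with probability $\geq 1 - n^{-\Omega(1)}$, hence the fraction of ``unbalanced'' partitions is $n^{-\Omega(1)}$. Third, for $G \sim G_{n,1/2}$ the expected number of triangles is $\binom{n}{3}/8 = \Theta(\binom{n}{3})$; by a concentration inequality for triangle counts in $G_{n,1/2}$ (e.g., Kim--Vu or a simple variance/Chebyshev calculation, since the variance is $O(n^4)$ while the mean is $\Theta(n^3)$), the number of triangles is $\Theta(\binom{n}{3})$, and in particular at least some fixed $t = \Theta(\binom{n}{3})$, with probability $\geq 1 - n^{-\Omega(1)}$.

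The main ``obstacle'' is really just making sure the three failure events can be combined cleanly. Taking a union bound over the three bad events yields
\[
|\bad| \;\le\; \bigl(\epsilon + n^{-\Omega(1)}\bigr)\,|\prt \times \rand|,
\]
so
\[
|\good| \;\ge\; \bigl(1 - \epsilon - n^{-\Omega(1)}\bigr)\,|\prt \times \rand| \;=\; (1 - \epsilon')\,|\prt \times \rand|,
\]
with $\epsilon' = \epsilon + n^{-\Omega(1)}$, matching the statement (up to the sign convention in the stated $n^{-\Omega(1)}$ term, which is absorbed into the $\Omega(\cdot)$). Here the only mildly nontrivial step is the triangle-count concentration; the balance bound is standard and the correctness bound is by definition. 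Once both parts are in hand, this lemma will let us invoke the General Lower Bound Theorem by identifying, on every good input, a machine whose output is forced to reveal $\Omega(t/k)$ triangles.
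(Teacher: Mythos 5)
Your proposal is correct and follows essentially the same route as the paper: part (A) is the immediate pigeonhole consequence of conditions (1) and (3) in the definition of $\good$, and part (B) is the same union bound over the three failure events (algorithm error $\le \epsilon$, imbalance of the random vertex partition with probability $n^{-\Omega(1)}$, and concentration of the triangle count in $G_{n,1/2}$, for which your Chebyshev calculation substitutes for the paper's citation to Janson and in fact yields the polynomially small failure probability that is needed). You are also right that the stated $\epsilon' = \epsilon - n^{-\Omega(1)}$ is a sign typo for $\epsilon + n^{-\Omega(1)}$.
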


\begin{proof}
Part (A) is immediate from the definition of $\good$.
For (B), note that $\cA$ succeeds with probability $\ge 1 - \epsilon$ and the random vertex partitioning guarantees a balanced partition with probability $\ge 1 - n^{-4}$.
We know from Equation~4.10 in \cite{janson}, that the number of triangles in a input graph $G$ sampled from $G_{n,1/2}$ is
$\Theta({n \choose 3})$ with probability $\ge 1 - e^{-\Omega(1)}$, and hence the size of $\good$ contains all except
at most a $(1 -  \epsilon - n^{-3})$-fraction of the graphs in $\prt\times\rand$.
\end{proof}
}

\onlyShort{
Similarly as for \Cref{thm:pageranklb}, we prove that
$\Prob{Z\!=\!z \mid p_i,r} \le 1/2^{{n \choose 2} - O\left({n^2\log n}/{k}\right)}$ and
$\Prob{Z=z \mid o_i, p_i,r } \ge 1/2^{{n \choose 2} - \frac{1}{3}\left(\frac{t}{k} - O\left(\frac{n^2\log n}{k}\right)\right)^{2/3}}$.
Then, setting $\infcost = \Theta(n^2/k^{2/3})$,
and applying \Cref{thm:expectedruntime}, completes the proof of \Cref{thm:trianglelb}.
}%
\onlyLong{%
\begin{lemma} \label{lem:entropyLB1}
Let random variable $Z$ denote the characteristic vector of the edges of the sampled input graph $G$. 
For every $(\bp,r) \in \good$ where $\bp=(p_1,\dots,p_k)$ and every characteristic edge vector $z$, it holds that
$
\Prob{Z\!=\!z \mid p_i,r} \le 1/2^{{n \choose 2} - O\left({n^2\log(n)}/{k}\right)},
$ for every $i \in [1,k]$. 
\end{lemma}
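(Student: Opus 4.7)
The plan is to exploit the independence structure of the Erd\"os-R\'enyi distribution $G_{n,1/2}$. Under this prior, the ${n \choose 2}$ coordinates of $Z$ are independent Bernoulli($1/2$) variables, so unconditionally $\Prob{Z=z} = 2^{-{n \choose 2}}$ for every candidate $z$. The task reduces to showing that conditioning on $(p_i,r)$ reveals only a bounded number of these coordinates, so that the posterior over $z$ stays nearly as flat as the prior.

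First I would remove $r$ from the conditioning: the public random string $r$ is drawn from $\rand$ independently of the random input graph $G$, so $r$ carries no information about $Z$ and hence $\Prob{Z=z \mid p_i, r} = \Prob{Z=z \mid p_i}$. Next I would analyze what $p_i$ actually exposes. The partition $p_i$ consists of a vertex set $V_i$ assigned to machine $M_i$ together with every edge incident to $V_i$. By the independence of edges in $G_{n,1/2}$, knowing $p_i$ fixes the value of $Z$ on exactly the coordinates corresponding to edges with at least one endpoint in $V_i$, while leaving the remaining coordinates (edges with both endpoints in $V \setminus V_i$) i.i.d.\ Bernoulli($1/2$).

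Now I would invoke the definition of $\good$: since $(\bp,r) \in \good$ the partition is balanced, i.e., $|V_i| = O(n \log(n)/k)$, so the number of edges incident to $V_i$ is at most $|V_i|(n-1) = O(n^2 \log(n)/k)$. Call this quantity $\eta$. Conditioned on $p_i$, the restriction of $Z$ to the remaining ${n\choose 2} - \eta$ coordinates is still uniform over $\{0,1\}^{{n\choose 2}-\eta}$, while the revealed coordinates are pinned to a single value. Consequently, for any $z$ that agrees with $p_i$ on the revealed coordinates,
\begin{equation*}
\Prob{Z=z \mid p_i, r} \;=\; 2^{-\left({n\choose 2} - \eta\right)} \;=\; 2^{-\left({n \choose 2} - O(n^2\log(n)/k)\right)},
\end{equation*}
and for any $z$ inconsistent with $p_i$ the conditional probability is zero, so the claimed upper bound holds in either case.

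There is no real obstacle here beyond carefully invoking the three ingredients: (i) full independence of edges under $G_{n,1/2}$, so the set of ``untouched'' coordinates keeps its uniform marginal; (ii) independence of the random string $r$ from the input graph; and (iii) the balanced-partition guarantee packaged into the definition of $\good$, which caps $|V_i|$ and hence the number of edges of $G$ that can be read off from $p_i$. The mild care required is to use $|V_i|(n-1)$ rather than ${|V_i| \choose 2}$ as the edge bound, so that edges with exactly one endpoint in $V_i$ are also counted as revealed.
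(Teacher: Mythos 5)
Your proof is correct and follows essentially the same route as the paper's: use the balancedness guarantee from the definition of \good to bound the number of coordinates of $Z$ that $p_i$ pins down by $O(n^2\log(n)/k)$, and then use the independence of the edges under $G_{n,1/2}$ (and of $r$ from the graph) to conclude that the remaining coordinates stay uniform. Your version is in fact slightly more careful than the paper's, since you explicitly count all vertex pairs touching $V_i$ (revealed non-edges included) rather than only the edges present.
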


\begin{proof}
For any $(\bp,r) \in \good$, each machine has initial knowledge of $O(|E(G)|\log n / k) = O(n^2 \log n /k)$ edges. 
Consider any machine $M_i$.
Since the random vertex partitioning and the sampling of the input graph are independent, there are at least
$2^{{n \choose 2}-O(n^2 \log (n) /k)}$ choices for the remaining edges, all of which are equally likely according to the random graph model, giving the claim.
\end{proof}
}

\onlyLong{
\begin{lemma}\label{lem:entropyUB}
Let $(\bp,r) \in \good$, where $\bp = (p_1,\dots,p_k)$. 
There exists a machine $M_i$ with output $\cA_i(\bp,r)$ such that, for every edge vector $z$ that has non-zero probability conditioned on $\cA_i(\bp,r)$, $p_i$, $r$,
$
\Prob{Z=z \mid \cA_i(\bp,r), p_i,r } \ge 1/2^{{n \choose 2} - O(n^2 \log(n)/k) - \Omega(({t}/{k})^{2/3})}$.
\end{lemma}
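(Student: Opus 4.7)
The plan is to convert the size of $M_i$'s triangle output into a lower bound on the number of edge positions of $Z$ whose status is determined by the conditioning, then exploit the independence structure of $G_{n,1/2}$ to translate this into the claimed probability lower bound. The combinatorial tool that drives the argument is a Kruskal--Katona-type inequality: a graph containing $\tau$ triangles has at least $\Omega(\tau^{2/3})$ distinct edges, since any graph with $e$ edges contains $O(e^{3/2})$ triangles.

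First, I would invoke \Cref{lem:good}(A) to fix a machine $M_i$ whose output contains at least $\tau := t/k$ triangles. By correctness on $\good$, every output triangle $\{u,v,w\}$ certifies the presence in $G$ of the edges $(u,v),(v,w),(u,w)$; let $E_\tau$ denote the set of distinct such edges, so $|E_\tau|\geq\Omega((t/k)^{2/3})$ by Kruskal--Katona. Second, I would observe that the partition $p_i$ reveals the status (present or absent) of the set $E_p$ of edge slots incident to $V_i$, with $|E_p| = O(n^2\log n/k)$ since $|V_i|=\tilde\Theta(n/k)$ by the balance condition. Third, since $G\sim G_{n,1/2}$ is sampled independently of the random vertex partition and of $r$, the edge slots outside $E_p$ remain i.i.d.\ Bernoulli$(1/2)$ after conditioning on $p_i,r$; further conditioning on $\cA_i(\bp,r)$ fixes each slot in $E_\tau$ to present, while slots outside $E_p \cup E_\tau$ retain their uniform distribution. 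Hence any $z$ with non-zero conditional probability has conditional mass at least $2^{-(\binom{n}{2}-|E_p\cup E_\tau|)}$; any additional deterministic inferences from correctness (for instance, triangles not listed in the output forcing absences) can only shrink the support, which only strengthens the probability bound. Finally, combining the bounds on $|E_p|$ and $|E_\tau|$ to estimate $|E_p\cup E_\tau|\geq \Omega((t/k)^{2/3}) + O(n^2\log n/k)$ yields the claimed inequality.

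The step I expect to be the main obstacle is controlling the overlap $|E_p\cap E_\tau|$ in order to obtain the additive combination in the exponent rather than just the looser $|E_p\cup E_\tau|\geq\max(|E_p|,|E_\tau|)$. An edge lies in $E_p\cap E_\tau$ only if it is an edge of an output triangle having at least one endpoint in $V_i$. Since $|V_i|=O(n\log n/k)$ is far smaller than the $\Omega((t/k)^{1/3})$ vertices needed to support the $t/k$ output triangles (again via Kruskal--Katona), most output triangles must have all three vertices outside $V_i$, so the overlap is dominated by $|E_p|$ and can be absorbed into the $O(n^2\log n/k)$ correction in the target exponent. Formalising this vertex-level counting argument in the regime $k=\Omega(\log n)$ constitutes the bulk of the technical work.
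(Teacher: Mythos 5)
Your proposal follows essentially the same route as the paper's proof: fix via \Cref{lem:good}(A) a machine $M_i$ outputting at least $t/k$ triangles, bound the edge slots revealed by the input by $O(n^2\log n/k)$, invoke the Kruskal--Katona-type bound that $\ell$ triangles span $\Omega(\ell^{2/3})$ distinct edges, and use the independence of the $G_{n,1/2}$ edges to convert the count of determined edge slots into the stated probability bound. Like you, the paper reduces everything to showing that the output certifies $\Omega((t/k)^{2/3})$ edges \emph{not} already determined by $p_i$.

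The step where your sketch has a genuine gap is exactly the one you flag. Knowing that $t/k$ triangles need $\Omega((t/k)^{1/3})$ supporting vertices while $|V_i| = O(n\log n/k)$ does not imply that most output triangles avoid $V_i$: every one of the $t/k$ triangles could have one vertex in $V_i$ and two outside, so the vertex support is still large yet every triangle meets $V_i$. Moreover, the relevant question is not whether a triangle touches $V_i$ but whether all three of its edges lie in $E_p$, which requires two of its vertices in $V_i$. The paper resolves the overlap as follows: the number of triangles fully determined by the input is $t_3 = O(|E_p|^{3/2}) = O((n^2\log n/k)^{3/2}) = o(t/k)$ (Kruskal--Katona in the other direction), so at least $t/k - t_3$ output triangles have at least one initially unknown edge; these are then split into $T_1$ (touching $V_i$) and $T_2$ (disjoint from $V_i$). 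For $T_2$ all three edges are initially unknown, yielding $\Omega(|T_2|^{2/3})$ new edges; for $T_1$ the unknown edge opposite the $V_i$-vertex is shared by at most $|V_i|$ such triangles, yielding $\Omega(|T_1|/|V_i|)$ new edges; the maximum of the two is $\Omega((t/k)^{2/3})$. Some such case analysis is necessary: the naive count of ``one new edge per undetermined triangle, each shared by at most $n$ triangles'' gives only $\Omega(t/(kn)) = \Omega(n^2/k)$ new edges, which falls short of the required $\Omega(n^2/k^{2/3})$.
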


\begin{proof}
By assumption $(\bp,r) \in \good$, which means that the machines output all $t=\Theta({n \choose 3})$ triangles. Thus there is some machine $M_i$ that outputs at least $t/k$ triangles.
We will bound from below the number of edges known by machine $M_i$ conditioned on its output and its input assignment.

Initially, $M_i$ discovers $t_3=t_3(P_i)$ ``local'' triangles (for which it knows all $3$ edges) by inspecting its assigned portion of the input graph given by $P_i$.
Since we are restricting the inputs to be in $\good_i$, we know that the edges known to $M_i$ are bounded by $O(n^2 \log n /k)$ and hence the number of triangles formed using these edges is 
\[ 
  t_3 = O((n^2 \log n /k)^{3/2}) = O(n^3 \log^{3/2}(n) / k^{3/2}).
\]
We call a triangle $\lambda$ \emph{undetermined w.r.t.\ $M_i$}, if $M_i$ is unaware of at least one edge of $\lambda$ initially.
Formally, $\lambda$ is undetermined if there are two input graphs $G,G'$ where $\lambda$ exists in $G$ but not in $G'$ and both graphs are compatible with the input $p_i$ assigned to machine $M_i$.

By the above, we have at least $(t/k) - t_3$ undetermined triangles that are output by $M_i$.
From Equation~(10) in \cite{rivin2001counting}, we know that the number of distinct edges necessary for representing $\ell$ triangles is $\Omega(\ell^{2/3})$.
This means that at least $\left((t/k) - t_3\right)^{2/3}$ edges are required for representing the undetermined triangles of $M_i$.
We can divide the undetermined triangles into two sets, one set $T_1$ contains triangles that have vertex allocated to $M_i$, and the other set $T_2$ contains triangles that have no vertex allocated to $M_i$. Set $T_1$ contributes $|T_1|/(n \log n/k)$ unknown edges, since the number of vertices allocated to this machine is $O(n \log n/k)$, whereas $T_2$ contributes $1/3 \cdot (|T_2|)^{2/3}$ unknown edges. These two sets of unknown edges might overlap, hence we need to consider the maximum over them, which can be shown to be $\Omega(((t/k)-t_3)^{2/3})$. Hence it is possible to recover $\Omega(((t/k)-t_3)^{2/3})$ edges from $M_i$'s output that were unknown to $M_i$ initially. Let $\eta$ denote the number of unknown edges of $Z$ when $M_i$ outputs its solution.
Taking into account the initially known edges, we have
\begin{align}
  \label{eq:eta}
  \eta \le {n \choose 2} -  \Omega\left(\frac{t}{k} - t_3\right)^{2/3} - O\left(\frac{n^2\log n}{k}\right) = {n \choose 2} - O\left(\frac{n^2\log n}{k}\right) -  \Omega\left(\frac{t}{k}\right)^{2/3}
\end{align}
possible edges that are unknown to $M_i$, since $t_3 = o(t/k)$.
Since we have sampled the edges of the input graph following the $G_{n,1/2}$ random graph model, it follows that, for any $z$ that has nonzero probability given $M_i$'s output and initial assignment, 
$\Prob{Z=z\mid o_i, p_i,r} = 2^{-\eta}$.
The lemma follows by applying \eqref{eq:eta}.
\end{proof}
}

\onlyLong{
\paragraph{Proof of \Cref{thm:trianglelb}} We are now ready to instantiate \Cref{thm:expectedruntime} where $Z$ is the characteristic vector of edges as defined above.
Note that \Cref{lem:entropyLB1} and \Cref{lem:entropyUB} satisfy Premises~\eqref{eq:precond1} and \eqref{eq:precond2}. 
Note that $\Omega(\tfrac{t}{k})^{2/3} = \Omega(\tfrac{n^2}{k^{2/3}})$.
Setting $\infcost = \Theta\left(\tfrac{n^2}{k^{2/3}}\right)$
completes the proof of \Cref{thm:trianglelb}.
}

\paragraph{A tight lower bound in the congested clique}
Our analysis extends in a straightforward way to the congested clique model where, in a synchronous complete network of $n$ machines, every machine $u$ receives exactly one input vertex of the input graph and gets to know all its incident edges. 
Together with the deterministic upper bound of $O(n^{1/3})$ shown in~\cite{DolevLP12}, this implies the following:

\begin{corollary}\label{cor:congested}
The  round complexity of enumerating all triangles in the congested clique of $n$ nodes with high probability of success is $\Omega\left(\frac{n^{1/3}}{B}\right)$, assuming a link bandwidth of $B$ bits.
This bound is tight up to logarithmic factors.
\end{corollary}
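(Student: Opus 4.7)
The plan is to derive this corollary directly by specializing the proof of \Cref{thm:trianglelb} to the case $k = n$, where each machine corresponds to exactly one vertex of the input graph, and then invoking \Cref{thm:expectedruntime}. Plugging $k = n$ into $\Omega(n^2/(Bk^{5/3}))$ gives $\Omega(n^{1/3}/B)$ on the nose, so the only real work is to re-certify the premises of the General Lower Bound Theorem under the congested-clique partition (in which the assignment of input vertices to machines is deterministic rather than random) and to pair the resulting lower bound with the known $O(n^{1/3})$ deterministic algorithm.

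First I would keep the same input distribution $G_{n,1/2}$ and the same random variable $Z$ (the characteristic vector of the edges). The definition of \good simplifies: in the congested clique the balance condition is automatic (each machine owns exactly one vertex and therefore at most $n-1$ edges), and the only randomness left in the definition comes from $G$ itself together with the algorithm's coin flips, so \good still contains all but an $(\epsilon + e^{-\Omega(1)})$-fraction of $(\bp,r)$. Moreover, since the algorithm must enumerate $t = \Theta(\binom{n}{3})$ triangles in a balanced way across only $n$ machines, some machine must output at least $t/n = \Omega(n^2)$ of them, exactly as in \Cref{lem:good}. Next I would re-prove the two entropy bounds with $k = n$. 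For \eqref{eq:precond1}, a single machine knows only its $\le n-1$ incident edges of $Z$, so for every consistent $z$ we have $\Prob{Z = z \mid p_i, r} \le 2^{-(\binom{n}{2} - O(n))}$, which matches \Cref{lem:entropyLB1} with $k = n$. For \eqref{eq:precond2}, the Rivin-type inequality $\text{(\#edges)} \ge \Omega(\text{\#triangles})^{2/3}$ used in \Cref{lem:entropyUB} implies that a machine outputting $\Omega(n^2)$ triangles must reveal $\Omega(n^{4/3})$ previously unknown edges, giving $\Prob{Z = z \mid \cA_i(\bp,r), p_i, r} \ge 2^{-\binom{n}{2} + \Omega(n^{4/3})}$. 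Setting $\infcost = \Theta(n^{4/3})$ then satisfies both premises, and \Cref{thm:expectedruntime} yields $T = \Omega(\infcost/(Bk)) = \Omega(n^{4/3}/(Bn)) = \Omega(n^{1/3}/B)$, as desired.

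For the matching upper bound I would simply cite the deterministic $O(n^{1/3})$-round triangle enumeration algorithm of Dolev, Lenzen and Peled~\cite{DolevLP12}, which runs in the congested clique with bandwidth $B = \Theta(\log n)$ and therefore matches the lower bound up to the polylogarithmic factor absorbed into $B$. The main obstacle is essentially bookkeeping: one must double-check that every ``with high probability'' statement in the proof of \Cref{thm:trianglelb} (balance of the partition, size of the set of locally induced triangles $t_3$, and the count of triangles in $G_{n,1/2}$) either still holds or becomes vacuous in the $k = n$ regime, and that the error-probability budget $\epsilon = o(k^{-2/3}) = o(n^{-2/3})$ remains consistent with the high-probability success requirement. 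None of these issues are subtle, which is why the extension to the congested clique is ``straightforward'' rather than requiring new ideas.
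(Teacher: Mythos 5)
Your proposal is correct and follows essentially the same route as the paper, which simply observes that the analysis of \Cref{thm:trianglelb} ``extends in a straightforward way'' to the congested clique with $k=n$ and one vertex per machine, and pairs it with the $O(n^{1/3})$ upper bound of~\cite{DolevLP12}. Your specialization of the premises (initial knowledge $O(n)$ edges, output of $\Omega(n^2)$ triangles forcing $\Omega(n^{4/3})$ revealed edges via the Rivin bound, hence $\infcost=\Theta(n^{4/3})$ and $T=\Omega(n^{1/3}/B)$) is exactly the calculation the paper leaves implicit.
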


\paragraph{Message lower bounds}
\onlyLong{We point out that it is possible to extend \Cref{thm:expectedruntime} to yield new message lower bounds for algorithms that attain an efficient time complexity. 
We outline the high-level argument for triangle enumeration.
Consider an algorithm matching the time bound of \Cref{thm:trianglelb}, i.e.,  $T=\tilde O(\tfrac{n^2}{k^{5/3}})$ assuming a bandwidth of $B=O(\log n)$ bits. 
In the $k$-machine, in $T$ rounds each machine can receive at most $\mu = \tilde O(\tfrac{n^2}{k^{2/3}})$ bits in total.
\Cref{lem:entropyLB1} tells us that every machine has very little initial knowledge about the $t$ triangles in the graph given its initial graph assignment, when considering inputs chosen from $\good$. 
On the other hand, inspecting the proof of \Cref{lem:entropyUB}, we can observe that
a machine $M_j$ who outputs $t_j$ triangles needs to receive $\tilde\Omega(t_j^{2/3})$ bits of information.
If we restrict the algorithm to terminate within $T$ rounds, this means that each machine can output at most $O(n^3 /k)$ triangles, as this requires $\mu = O((\tfrac{n^3}{k})^{2/3})$ bits of information. This implies that the output per machine must be roughly balanced and every machine needs to receive $\Omega(\mu)$ bits of information, yielding a message complexity of $\tilde\Omega(k\tfrac{n^2}{k^{2/3}}) = \tilde\Omega(n^2 k^{1/3})$. In particular, this rules out algorithms that aggregate all input information at a single machine (which would only require $O(m)$ messages in total). 
From the above, we have the following.}

\begin{corollary}\label{cor:msgtri}
 Let $\cA$ by any algorithm that enumerates all triangles with high probability and terminates in $\tilde O(\tfrac{n^2}{k^{5/3}})$ rounds. 
 Then, the total message complexity in the $k$-machine model of $\cA$ is $\tilde\Omega(n^2 k^{1/3})$. For $\tilde O(n^{1/3})$-rounds algorithms in the congested clique, the message complexity is $\tilde\Omega(n^{7/3})$.
\end{corollary}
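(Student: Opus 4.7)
The plan is to formalize the sketch given just before the corollary by quantifying two things: (i) how much information per machine is needed to output a given number of triangles, via the lower bound of Lemma~\ref{lem:entropyUB}, and (ii) how much information any single machine can absorb under the prescribed round bound. Comparing these two quantities will force the per-machine output (and hence the per-machine received bits) to be roughly balanced, and summing over the $k$ machines will give the desired message complexity.

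First, I would fix an input distribution from the class $\good$ used in the proof of Theorem~\ref{thm:trianglelb}, so that with high probability the sampled graph has $t=\Theta(n^3)$ triangles and the initial partition reveals to each machine only $O(n^2\log(n)/k)$ edges, which in turn locally determines only $o(t/k)$ triangles. Consider any machine $M_j$ that outputs $t_j$ triangles. Rerunning the entropy accounting from Lemma~\ref{lem:entropyUB} on $M_j$ instead of on the ``maximum output" machine shows that conditioned on its input, its random string, and its output, the remaining entropy of the edge characteristic vector $Z$ decreases by $\widetilde\Omega(t_j^{2/3})$ bits compared to the initial bound of Lemma~\ref{lem:entropyLB1}. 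By the same encoding argument used in Lemma~\ref{lem:encoding}, this entropy drop is bounded above by the total number of bits received by $M_j$ over its $k-1$ links during the entire execution; hence $M_j$ must receive $\widetilde\Omega(t_j^{2/3})$ bits.

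Next, I would plug in the round bound. With $B=O(\log n)$ and $T=\widetilde O(n^2/k^{5/3})$, every machine receives at most $\mu=(k-1)\cdot B\cdot T=\widetilde O(n^2/k^{2/3})$ bits in total. Combining with the previous paragraph yields $t_j^{2/3}=\widetilde O(\mu)$, i.e.\ $t_j=\widetilde O((n^2/k^{2/3})^{3/2})=\widetilde O(n^3/k)$. Since $\sum_j t_j\ge t=\Theta(n^3)$ (each triangle must be output by at least one machine), this upper bound on $t_j$ is tight up to polylog factors: a constant fraction of the machines must output $\widetilde\Omega(n^3/k)$ triangles each, and each such machine must therefore receive $\widetilde\Omega((n^3/k)^{2/3})=\widetilde\Omega(n^2/k^{2/3})$ bits. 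Summing over the $\Theta(k)$ such machines gives a total bit complexity of $\widetilde\Omega(n^2 k^{1/3})$, which translates to a message complexity of $\widetilde\Omega(n^2 k^{1/3})$ since each message carries $O(\log n)$ bits. The congested clique case follows by plugging $k=n$ and $T=\widetilde O(n^{1/3})$ into exactly the same inequalities: $\mu=\widetilde O(n^{4/3})$, $t_j=\widetilde O(n^2)$, so $\Omega(n)$ machines each must receive $\widetilde\Omega(n^{4/3})$ bits, yielding $\widetilde\Omega(n^{7/3})$ total.

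The main obstacle I expect is the balancing step: showing that one cannot ``cheat'' by letting a few machines output many triangles while the rest output few. The output cap $t_j=\widetilde O(n^3/k)$ from the per-machine information budget is exactly what rules this out, so the argument hinges on carefully extracting the Lemma~\ref{lem:entropyUB} bound for an \emph{arbitrary} machine (not only the critical-index machine used in Theorem~\ref{thm:expectedruntime}). A secondary technical point is handling the $O(n^2\log n/k)$ edges known for free from the initial partition: as in the proof of Lemma~\ref{lem:entropyUB}, one checks that this slack is absorbed into the $\widetilde\Omega(\cdot)$ notation because $(n^3/k)^{2/3}=\Omega(n^2/k^{2/3})$ dominates $n^2\log n/k$ for $k=\Omega(\log n)$, so the per-machine received-bits lower bound survives.
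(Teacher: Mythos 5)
Your proposal is correct and follows essentially the same argument the paper gives: a per-machine information budget of $\mu=\tilde O(n^2/k^{2/3})$ bits under the round bound, the $\tilde\Omega(t_j^{2/3})$ received-bits requirement extracted from Lemma~\ref{lem:entropyUB} for an arbitrary machine, and the resulting cap $t_j=\tilde O(n^3/k)$ that forces $\tilde\Omega(k)$ machines to each receive $\tilde\Omega(n^2/k^{2/3})$ bits. Your treatment of the balancing step and of the $t_3$/initially-known-edges slack matches (and slightly sharpens) the paper's sketch.
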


%
%
%

\begin{comment}
\noindent\textbf{A Lower Bound for Enumerating $c$-Cycles.}
We briefly sketch how to adapt the proof of \Cref{thn:trianglelb} to obtain a lower bound for enumerating $c$-cycles, for $c>3$.
The only part that needs adaptation is \Cref{lem:entropyUB}.
We note that, by Equation~4.10 in \cite{janson}, the number of distinct edges required for $\ell$ $c$-cycles is $\Omega(2^{c/2-1} \ell^{c/2})$.
\peter{\textbf{TODO:} We need to extend the concentration bound for the number of $c$-cycles in $G_{n,1/2}$. This needs to be $\Omega({n \choose c} 2^{-c})$.}
\end{comment}

%

%
%
%
%

%

%

\onlyShort{\vspace{-0.1in}}
\section{Upper Bounds}
\onlyShort{\vspace{-0.05in}}

\subsection{An Almost Optimal Algorithm for PageRank Approximation}\label{sec:pralgo}
\onlyShort{\vspace{-0.05in}}
In this section we present a simple distributed algorithm to approximate the $\pagerank$ vector of an input graph in the
$k$-machine model. This algorithm has a round complexity of $\tilde O(n/k^2)$, which significantly improves
over the previous $\tilde O(n/k)$-round solution~\cite{KlauckNPR15}.

We first recall the distributed random walk-based Monte-Carlo algorithm for computing $\pagerank$,
for a given reset probability $\epsilon$, as described in~\cite{pagerank}.
This algorithm is designed and analyzed in the standard {\cal CONGEST} model,
where each vertex of the graph executes the algorithm. The algorithm is as follows.
Initially, each vertex creates $c\log n$ random walk tokens, where $c=c(\epsilon)$ is a parameter defined in~\cite{pagerank}
($c(\epsilon)$ is inversely proportional to $\epsilon$), which are then forwarded according to the following process:
when a node $u$ receives some random walk token $\rho$, either, it terminates the token with probability $\epsilon$, or,
with probability $1-\epsilon$, forwards it to a neighbor of $u$ chosen uniformly at random.
Each machine keeps a variable $\psi_v$, for each of its nodes $v$, which counts the number of random walk tokens that
were addressed to $v$ (i.e., the total number  of all random walks that visit $v$).
Each node $v$ then estimates its $\pagerank$ by computing $\frac{\epsilon\psi_v}{cn\log n}$.
It can be shown that this estimate gives a $\delta$-approximation, for any constant $\delta > 0$,
to the $\pagerank$ value of each node $v$ with high probability, and that this algorithm terminates in $O(\log n/\epsilon)$
rounds with high probability~\cite{pagerank}. The key idea to obtain such a fast runtime is to send only the {\em  counts}
of the random walks, instead of keeping track of the random walks from different sources.
Clearly, only the number (i.e., count) of the random walks visiting a node at any step is required to estimate the $\pagerank$.
\onlyShort{In the full paper, we describe why a na\"ive implementation of the above approach only yields a running time of $\tilde O(n/k)$.}
\onlyLong{
Note that a straightforward implementation of the above random walk-based algorithm might yield a suboptimal running time in the $k$-machine model. (In fact, applying the Conversion Theorem of~\cite{KlauckNPR15} to implement the above algorithm
gives only $\tilde O(n/k)$ time.) The main issue is that some machine might receive too many random walks destined for the nodes in that machine. For example, during some step of the random
walk it might happen that $n$ different walks  are destined to different nodes in the same machine,
causing $\Omega(n)$ congestion at some machine leading to a $\Omega(n/k)$ bound. 
For example, in a star-like topology, the center vertex $c$ which resides at some machine $M_1$ might need to receive $n$ random walks from its neighbors, hence causing a round complexity of $\tilde\Omega(n/k)$. In the above example, since there is only one high degree vertex, we  can get around this problem  by sending only the counts.  However, the situation is less clear if $\Omega(n)$
tokens are destined for different nodes in the same machine.   
}

To avoid the above pitfalls, we describe an approach that directly exploits the $k$-machine model.
On the one hand, our goal is to reduce the total amount of communication while, on the other hand, we need to ensures that the incurred message complexity is balanced for the available machines. 
This motivates us to treat vertices differently depending on how many tokens they hold.
We say that a vertex $u$ is \emph{light in iteration $r$} if, conceptually, the machine that hosts $u$ considers less than $k$ tokens to be held at $u$.
Otherwise, we say that $u$ is \emph{heavy in iteration $r$}.
Note that, throughout the course of our algorithm, the value of $\tokens[v]$ depends on the topology of the input graph and hence a vertex can change its status w.r.t.\ being a heavy or a light vertex.

In our algorithm (Algorithm~\ref{alg:pagerank}), each machine $M$ stores an array $\tokens[u]$, which has an entry for each vertex $u$ hosted at $M$.
Initially, we generate $\Theta(\log n)$ tokens for each vertex which we use as the initialization value of $\tokens$.
Then, we mimic the (parallel) random walk steps of \cite{pagerank} by performing $\Theta(\log(n) / \epsilon)$ iterations where, in each iteration, each machine $M$ first considers the tokens stored for its light vertices. 
For each such token held at one of its vertices $u$, $M$ uniformly at random selects a neighboring vertex $v$ and keeps track of how many tokens have chosen $v$ in a separate array $\alpha[v]$.
In particular, $M$ also increments the same entry $\alpha[v]$ if $v$ is chosen as the destination for some token of a distinct low-load vertex $w \ne u$ at $M$.
Then, $M$ sends a message $\langle \alpha[v], \text{dest:} v \rangle$ for each $v$ where $\alpha[v]$ is nonzero, which is subsequently delivered to the destination machine using random routing (cf.\ Lemma~\ref{lem:direct-routing}). This ensures that all the messages are delivered in $\tilde O(n/k^2)$ rounds.

We now describe how high-load vertices are processed, each of which can hold up to $O(n\log n)$ tokens. 
To avoid potentially sending a large number of messages for a single high-load vertex $u$, machine $M$ considers the index set $I$ of machines that host at least one neighbor of $u$.
Then, for each token of $u$, machine $M$ samples an index from $I$ according to the degree distribution of $u$ (see Line~\ref{line:sampleDestMachine} in Algorithm~\ref{alg:pagerank}) and keeps track of these counts in an array $\beta$, which has an entry for each machine in $I$.
Finally, $M$ generates one message of type $\langle \beta[j], \text{src:} u \rangle$, for each entry $j$ where $\beta[j]>0$ and sends this count message directly to the respective destination machine. 
We show that these messages can be delivered in $\tilde O(n/k^2)$ rounds by proving that, with high probability, each machine holds $\tilde O(n/k^2)$ high-load vertices in any given iteration of the algorithm.

\begin{algorithm}[ht!]
\begin{algorithmic}[1]
  \State Let $V_i$ denote the vertices hosted by machine $M_i$
  \State Initalize array $\tokens[u] \gets \lceil c\log n \rceil$, for $u \in V_i$, where $c>0$ is a suitable constant \Comment{$\tokens[u]$ represents the current number of tokens at vertex $u$}
  \For{$\Theta(\log(n) / \epsilon )$ iterations}
  \For{$u \in V_i$}
    \State sample $t$ from distribution $Binomial(\tokens[u],\epsilon)$
    \State $\tokens[u] \gets \tokens[u] - t$ \label{line:terminate} \Comment{Terminate each token with probability $\epsilon$}
  \EndFor
  \State
  \State Initialize array $\alpha[v] \gets 0$, for each $v \in V$ 
  \Comment{Process the light vertices} \label{line:lowload1}
  \For{each vertex $u \in V_i$ where $\tokens[u] < k$}
      \State let $N_u \subseteq V$ be the set of neighbors of vertex $u$
      \While{$\tokens[u]>0$}
        \State sample $v$ uniformly at random from $N_u$
        \State $\alpha[v]  \gets \alpha[v] + 1$
        \State $\tokens[u] \gets \tokens[u] - 1$
      \EndWhile
  \EndFor 
  \For{each $v \in V_i$ where $\alpha[v] > 0$}
    \State \textbf{send} message $\langle \alpha[v], \text{dest: }v \rangle$ \textbf{to} the machine hosting vertex $v$ using random routing \label{line:lowLoadMsg}
  \EndFor   \label{line:lowload2}
  \State
  \For{each vertex $u \in V_i$ where $\tokens[u] \ge k$} \label{line:highload1}
  \Comment{Process the heavy vertices} 
      \State let $I \subseteq [k]$ be the index set of the machines that host a neighbor of $u$
      \State initialize array $\beta[j] \gets 0$, for each $j \in I$
      \While{$\tokens[u]>0$}
        \State let $n_{j,u}$ be number of neighbors of $u$ hosted at machine $M_j$ and let $d_u$ be $u$'s degree
        \State sample index $j$ from distribution $\left(\frac{n_{1,u}}{d_u},\dots,\frac{n_{k,u}}{d_u}\right)$ \Comment{Note $\sum_{j=1}^k n_{j,u} = d_u$} \label{line:sampleDestMachine}
        \State $\beta[j] \gets \beta[j] + 1$
        \State $\tokens[u] \gets \tokens[u] - 1$
      \EndWhile
  \For{each $j \in I$ where $\beta[j]>0$}
    \State \textbf{send} message $\langle \beta[j], \text{src: }u \rangle$ \textbf{to} machine $M_j$ \label{line:highLoadMsg}
  \EndFor 
  \EndFor \label{line:highload2}
  \State
  \For{each received message of type $\langle c_w, \text{dest: }w \rangle$}
    \State $\tokens[w] \gets \tokens[w] + c_w$
  \EndFor
  \For{each received message of type $\langle c_v, \text{src: }v \rangle$} \label{line:highrcv1}
    \While{$c_v>0$} 
      \State let $N_v \subseteq V$ be the set of neighbors of $v$ hosted at $M_i$
      \State sample $w$ uniformly at random from $N_v$
      \State $\tokens[w] \gets \tokens[w] + 1$
      \State $c_v \gets c_v - 1$
    \EndWhile \label{line:highrcv2}
  \EndFor 
  \EndFor
\end{algorithmic}
\caption{Approximating the PageRank with reset probability $\epsilon>0$. Code for machine $M_i$.}
\label{alg:pagerank}
\end{algorithm}

\begin{proposition} 
  \label{lem:correctness}
  Algorithm~\ref{alg:pagerank} correctly computes the \pagerank\ with high probability. 
\end{proposition}

\begin{proof}
In \cite{pagerank} it is shown that the random walk process, where each token is either terminated with probability $\epsilon$ or forwarded with probability $1 - \epsilon$ to a neighbor chosen uniformly at random, approximates the \pagerank\ of the graph.
Thus it is sufficient to show that Algorithm~\ref{alg:pagerank} adheres to this random walk process.

Consider a node $u$ and suppose that $u$ holds $\ell$ tokens. 
If $\ell < k$, then according to Lines~\ref{line:lowload1}-\ref{line:lowload2}, we increment the corresponding entry of array $\alpha[v]$, for some uniformly at random chosen neighbor $v$ of $u$ and send a message $\langle c_v, dest: v \rangle$ to the machine $M'$ hosting $v$.
Upon receiving the message, $M'$ increases its token count of $v$, as required.

Now, suppose that $\ell \ge k$ and consider an arbitrary neighbor $v$ of $u$, hosted on machine $M'$ and assume that $M'$ hosts $n_u \ge 1$ neighbors of $u$ in total. 
For any token of $u$, it follows from Line~\ref{line:sampleDestMachine} that we choose machine $M'$ with probability $\frac{n_u}{d_u}$, where $d_u$ is the degree of $u$ in the graph. 

The algorithm then sends a message of type $\langle c_u, src:u \rangle$ to machine $M'$ where $c_u$ is the number of tokens of $u$ for which $M'$ was sampled as the destination machine.
Upon processing this message in Lines~\ref{line:highrcv1}-\ref{line:highrcv2}, $M'$ delivers each token to its locally hosted neighbors of $u$ uniformly at random, and hence a specific neighbor $v$ receives a token with with probability $\frac{1}{n_u}$.

Combining these observations, we conclude that $v$ receives a token with probability $\frac{n_u}{d_u} \frac{1}{n_u} = \frac{1}{d_u}$, conditioned on the token not having been terminated in Line~\ref{line:terminate} with probability $\epsilon$, which corresponds to the random walk process of \cite{pagerank}.
\end{proof} 

\begin{lemma} \label{lem:sendOut}
  Every machine $M_i$ sends at most $O(n\log(n)/k)$ messages in any 
  iteration $r$ with high probability.
\end{lemma}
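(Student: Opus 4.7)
I will split the messages sent by $M_i$ in iteration $r$ into two kinds: \emph{type L} messages, i.e., the aggregated messages emitted on Line~\ref{line:lowLoadMsg} by low-load vertices hosted at $M_i$, and \emph{type H} messages emitted on Line~\ref{line:highLoadMsg} by high-load vertices at $M_i$. The plan is to bound each type by $\tilde O(n/k)$ separately and then take a union bound over the two types and over all $\Theta(\log n/\epsilon)$ iterations.

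Let $N_r$ denote the total number of tokens in the system at the start of iteration $r$; note that $N_r \le c\, n\log n$ for every $r$ since tokens are only removed (in Line~\ref{line:terminate}) and never created after initialization. The crucial structural observation I would first establish is that the random walk dynamics are independent of the random vertex partitioning: the destination of every token is a uniformly random neighbor of its current vertex (for high-load vertices this requires checking that the two-stage sampling in Line~\ref{line:sampleDestMachine} plus the uniform sample at the receiving machine in Lines~\ref{line:highrcv1}--\ref{line:highrcv2} composes to a uniform-at-random neighbor, as exploited in \Cref{lem:correctness}). Consequently, I can condition on the algorithm's internal randomness to fix the per-vertex token counts $\tokens[v]$ and then treat the assignments $\mathbf{1}[v \in V_i]$ as independent Bernoulli$(1/k)$ variables.

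For \emph{type H}, a vertex is high-load only if it holds at least $k$ tokens, so there are at most $N_r/k = O(n\log n/k)$ high-load vertices in the entire graph. Conditioning on this set and applying a standard Chernoff bound over the independent uniform partitioning yields $|V_i^{\text{high}}| = O(N_r/k^2 + \log n) = \tilde O(n/k^2)$ with probability $1 - n^{-\Omega(1)}$. Since each high-load vertex contributes at most $k$ messages (at most one per destination machine), the total type~H count at $M_i$ is at most $k\cdot |V_i^{\text{high}}| = \tilde O(n/k)$. For \emph{type L}, each low-load vertex contributes at most $\tokens[u] < k$ messages (after aggregation by destination), so the total is bounded by $\sum_{v \in V_i^{\text{low}}} \tokens[v] \le \sum_{v \in V_i} W_v$, where $W_v := \min(\tokens[v], k)$. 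Because $0 \le W_v \le k$ and the assignments are independent conditional on the walk, $\mathbb{E}[\sum_{v \in V_i} W_v] \le N_r/k$ and a Chernoff/Bernstein bound for bounded variables gives $\sum_{v \in V_i} W_v = \tilde O(n/k)$ with probability $1 - n^{-\Omega(1)}$.

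The main obstacle I anticipate is precisely the one that motivates the low/high-load split in the algorithm: the token count $\tokens[v]$ at a single vertex can in principle be as large as $N_r$, so a naive Chernoff bound on $\sum_{v\in V_i}\tokens[v]$ would fail due to unbounded per-term contributions. Truncating at $k$ via $W_v$ handles the low-load part, while the global count $N_r/k$ of vertices with $\tokens[v]\ge k$ handles the high-load part; getting these two estimates to line up (each $\tilde O(n/k)$) is what drives the choice of threshold $k$ separating the two regimes. After bounding both types in a single iteration, a union bound over the $\Theta(\log n/\epsilon)$ iterations yields the claim with high probability.
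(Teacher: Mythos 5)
Your proof is correct and follows essentially the same approach as the paper: the same split into low-load and high-load vertices, the same use of the $O(n\log n)$ total token budget together with concentration over the random vertex partition (conditioned on the token counts), and the same $\le k-1$ cap on messages per vertex. The only differences are cosmetic — you replace the paper's dyadic binning by a direct count of high-load vertices plus truncation at $k$ and Bernstein for the low-load sum, and you make explicit the (correct, and implicitly assumed by the paper) fact that the distribution of $\tokens[\cdot]$ is independent of the partition.
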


\begin{proof}
First, we consider messages that $M_i$ needs to send on behalf of its hosted light vertices.
We classify the light vertices into \emph{send bins} $S_0,S_1,\dots,S_{\lceil\log k\rceil -1}$, according to the number of distinct messages that they require to be sent and, for each $j$, $0 \le j \le \lceil\log_2 k\rceil - 1$, we define the bin
\begin{equation}
  \label{eq:binBnd}
  S_j = \left\{ v \in V(G)\ \middle|\ \frac{k}{2^{j+1}} \le \tokens[v] < \frac{k}{2^j} \right\}.
\end{equation}
By definition, the total number of messages generated for any light vertex in iteration $r$ is at most $k-1$, and hence every light $v$ is in some bin $S_j$.

Since $\Theta(\log n)$ tokens are generated initially for each vertex, we have $\Theta(n\log n)$ tokens in total, which implies that $|S_j| \le \frac{2^{j+1}n\log n}{k},$ for all $j$.
By the random vertex partitioning, we know that a machine $M_i$ receives at most $ O(|S_j|\log(n)/k)$ vertices from $S_j$ with probability $\ge 1 - n^{-4}$; we denote this vertex set by $S_{i,j}$.
Taking a union bound over the iterations of the algorithms (assuming a constant reset probability $\epsilon$), the $O(\log_2 k)$ distinct bins, and over the $k$ machines, it follows that 
\begin{equation} \label{eq:bins} 
  \forall M_i\ \forall j \in \{0,\dots,\lceil\log_2 k\rceil-1\} \colon |S_{i, j }| = O\left(\frac{2^{j+1}n\log n}{k^2}\right),
\end{equation}
with probability $\ge 1 - n^{-2}$.
According to \eqref{eq:binBnd}, each vertex in bin $S_j$ holds less than $k/2^j$ tokens, and thus by \eqref{eq:bins} the total number of messages produced by vertices in $S_j$ that are located on machine $M_i$ is 
\[
  O\left(|S_{i,j}|\cdot\frac{k}{2^j}\right) 
  =  O\left(\frac{2^{j+1}n\log n}{k^2}\frac{k}{2^j}\right) 
=  O\left(\frac{n\log n}{k}\right).
\]
Since we have $\Theta(\log k)$ bins, the total number of messages generated by machine $M_i$ for its light vertices is $ O(n\log(n)/k) \cdot \Theta(\log k) = \tilde O(n/k)$ with high probability. 

Now, consider the heavy vertices at $M_i$.
By definition, each heavy vertex has at least $k$ tokens and hence there are at most $O(n\log(n)/k)$ heavy vertices at any point of the algorithm.
Therefore, the random vertex partitioning implies that each machine will hold most $O(n\log(n)/k^2)$ many heavy vertices whp.
For processing the tokens of a heavy vertex $u$, we recall from Algorithm~\ref{alg:pagerank} that we need to send at most $1$ message to each machine that holds a neighbor of $u$. 
This means that all messages generated for $u$ can be sent and delivered in $1$ round and hence by taking a union bound over all the machines, it follows that each machine can send all tokens for its heavy vertices in $O(n\log(n)/k^2)$ rounds.

Finally, the lemma follows by taking a union bound over the $O(\log(n) / \epsilon)$ iterations of the algorithm.
\end{proof}

A key ingredient in the analysis of the algorithm is the following simple lemma, which quantifies
how fast some specific routing can be done in the $k$-machine model.

\begin{lemma}\label{lem:direct-routing}
Consider a complete network of $k$ machines, where each link can carry one message of $O(\polylog n)$
bits at each round. If each machine is source of $O(x)$ messages whose destinations are distributed
independently and uniformly at random, or each machine is destination of $O(x)$ messages whose
sources are distributed independently and uniformly at random, then all the messages can be routed
in $O((x \log x)/k)$ rounds w.h.p.
\end{lemma}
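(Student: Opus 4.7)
The plan is to bound, via a Chernoff argument, the load induced on each individual directed link of the complete $k$-machine network, and then observe that a trivial one-message-per-link-per-round schedule finishes in time proportional to this maximum load. I will first handle the case where each $M_i$ is the source of $s_i = O(x)$ messages whose destinations are i.i.d.\ uniform over $[k]$; the other case is symmetric by exchanging the roles of source and destination.

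For each ordered pair $(i,j) \in [k]\times[k]$, let $Y_{ij}$ be the number of messages originating at $M_i$ with destination $M_j$. Since each destination is chosen independently with probability $1/k$, $Y_{ij}$ is a sum of $O(x)$ independent Bernoulli$(1/k)$ variables with mean $\mu_{ij} = O(x/k)$. A standard upper-tail Chernoff bound yields $Y_{ij} = O(x/k + \log n)$ with probability at least $1 - n^{-\Omega(1)}$, and a union bound over the at most $k^2 \le n^2$ directed links shows this holds uniformly over every link w.h.p.

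Given such a uniform link-load bound, the routing schedule is immediate: in every round each machine transmits one pending message across each of its outgoing links whose queue is nonempty, which is permitted because each link carries one $O(\polylog n)$-bit message per round and different links do not conflict. Each link's queue empties after $Y_{ij}$ rounds, so the whole schedule terminates after $\max_{i,j} Y_{ij} = O(x/k + \log n)$ rounds w.h.p. This matches the claimed $O((x\log x)/k)$ bound in the regime in which the lemma is applied (namely $x$ at least polylogarithmic in $n$ and $x = \Omega(k)$, so that $k\log n = O(x \log x)$ absorbs the additive slack).

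The main obstacle is the sub-logarithmic-mean regime of the Chernoff analysis: when $\mu_{ij} = x/k \ll \log n$ the load is dominated by the additive $\log n$ term rather than by $x/k$, so one must check that this additive term still fits inside the claimed $(x\log x)/k$ form. This is handled by the sharper upper-tail bound $\Pr[Y_{ij} \ge t] \le (e\mu_{ij}/t)^t$ valid for $t \ge \mu_{ij}$, which for $\mu_{ij} = o(1)$ gives $Y_{ij} = O(\log n/\log\log n)$ w.h.p., comfortably absorbed by $(x\log x)/k$ whenever $x \ge k$, the regime in which this lemma is invoked throughout the paper (e.g.\ $x = \tilde O(n/k)$ with $n \gg k$).
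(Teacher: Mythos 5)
Your proof is correct and follows essentially the same route as the paper's: route each message directly over the source--destination link and bound the per-link load by a balls-into-bins/Chernoff argument, from which the round bound is immediate since distinct links do not conflict. The paper simply invokes ``a classic balls-into-bins result'' to assert a per-link load of $O((x\log x)/k)$, whereas you make the concentration bound and the regime of parameters ($x=\Omega(k)$, $x$ at least polylogarithmic) explicit, which is a harmless and arguably more careful rendering of the same argument.
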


\onlyLong{
\begin{proof}
We shall prove the statement for the case in which each machine is the source of $O(x)$ messages.
The other case and its analysis are symmetric.

Since destinations of messages are chosen randomly, we choose to route each message to its
(random) destination machine through the link that directly connects the source to the destination
machine (which always exists because the network is complete). By a classic balls-into-bins result,
each of the $k-1$ links of each machine is responsible for carrying $O((x \log x)/k)$ messages w.h.p.,
and the result follows.
\end{proof}
}

\begin{lemma} 
  \label{lem:sendTime}
  Consider any iteration $r$ of Algorithm~\ref{alg:pagerank}. Then, with high probability, all messages generated at iteration $r$
  can be delivered in $\tilde O(n/k^2)$ rounds.
\end{lemma}

\begin{proof}
We first consider the messages generated due to a heavy vertex $u$. 
Recall from Algorithm~\ref{alg:pagerank} that each machine directly sends the messages that it generated for $u$ to the destination machines, which requires just $1$ round.  
As we have argued in Lemma~\ref{lem:sendOut}, there are at most $O(n\log(n)/k^2)$ many heavy vertices per machine whp and hence all of their messages can be delivered within $O(n\log(n)/k^2)$ rounds.

In the remainder of the proof, we focus on messages generated while processing light vertices.
To this end, we argue that each machine needs to receive at most $\tilde O(n/k)$ messages that were generated due to light vertices in Line~\ref{line:lowLoadMsg}, which according to the random routing result, can be delivered in $\tilde O(n/k^2)$ rounds. 
We proceed similarly to the analysis in Lemma~\ref{lem:sendOut}. 
That is, we define \emph{receive bins} $R_0,R_1,\dots,R_{\lceil\log k\rceil-1}$, where 
\[
R_j = \left\{ v \in V(G) \mid \frac{k}{2^{j+1}} \le \lambda_v \le \frac{k}{2^j} \right\}\]
 and $\lambda_v$ is
the random variable that counts the number of tokens generated for light vertices that are received by $v$ in iteration $r$.
Consider any $v \in V(G)$ located at some machine $M$.
The crucial point is that each $v$ must be in exactly one of these bins, since Line~\ref{line:lowLoadMsg} ensures that machine $M$ receives at most $1$ message of type $\langle \alpha[v], \text{dest:} v\rangle$  that is addressed to $v$ from each distinct machine $M'$.

Similarly as in Lemma~\ref{lem:sendOut}, it follows by the properties of the random vertex partitioning that each machine holds $\tilde O(|R_j|/k)$ vertices from $R_j$ with high probability,
and hence the total number of messages that each machine needs to receive (over all  receive bins) is $\tilde O(n/k)$. 
Thus, by Lemma~\ref{lem:direct-routing}, all of these messages can be delivered in $\tilde O(n/k^2)$ rounds.
Finally, it is shown in \cite{pagerank} that all tokens are terminated in $O(\log (n)/\epsilon)$ steps and thus, assuming that $\epsilon>0$ is a small constant,
the claim follows by a union bound over the iterations of the algorithm.

\end{proof}

From Lemma~\ref{lem:sendTime} we conclude that all messages generated in a single iteration of Algorithm~\ref{alg:pagerank} can be delivered in $\tilde O(n/k^2)$ rounds with high probability. A union bound implies the following result.

\begin{theorem}
Algorithm \ref{alg:pagerank} computes a $\delta$-approximation of the PageRank vector of
an $n$-node graph in the $k$-machine model with high probability in $\tilde O(n/k^2)$ rounds,
for any constant $\delta > 0$.
\end{theorem}

\onlyShort{\vspace{-0.1in}}
\subsection{An Almost Optimal Algorithm for Triangle Enumeration}\label{sec:trialgo}
\onlyShort{\vspace{-0.05in}}

In this section we present a randomized algorithm that enumerates all the triangles
of an input graph $G = (V,E)$, and that terminates in $\tilde O(m/k^{5/3} + n/k^{4/3})$ rounds w.h.p.
This bound does not match the (existential) $\tilde \Omega(m/k^{5/3})$ lower bound
provided in Section~\ref{sec:trianglelb} only for very sparse graphs.

Our algorithm is a generalization of the algorithm \textsf{TriPartition} of Dolev et al.\ for the
congested clique model~\cite{DolevLP12}, with some crucial differences explained next.
The key idea, which in its generality can be traced back to~\cite{AfratiU11}, is to partition
the set $V$ of nodes of $G$ in $k^{1/3}$ subsets of $n/k^{1/3}$ nodes each,
and to have each of the $k$ machines to examine the edges between pairs of subsets
in one of the $(k^{1/3})^3 = k$ possible triplets of subsets (repetitions are allowed).

The algorithm is as follows. Each node picks independently and uniformly at random one color from a set
$C$ of $k^{1/3}$ distinct colors through a hash function $h:V \rightarrow C$ initially known by all the machines.
This gives rise to a color-based partition of the vertex set $V$ into $k^{1/3}$ subsets of $\tilde O(n/k^{1/3})$ nodes each, w.h.p.
A deterministic assignment of triplets of colors, hard-coded into the algorithm, logically assigns each of the $k$
possible triplets of such subsets to one distinct machine.
Each machine then collects all the edges between pairs of subsets in its triplet.
This is accomplished in two steps: (1) For each of the edges it holds,
each machine designates one random machine (among the $k$ machines) as the \emph{edge proxy} for that edge, and sends all its
edges to the respective edge proxies. The designation of an edge itself is done 
by the following {\em proxy assignment rule}
(this is necessary to avoid congestion at any one machine): A machine that has a node $v$
whose degree is at least $2k \log n$ requests all other machines to designate the
respective edge proxies for each of the incident edges of node $v$. If two machines
request each other to designate the same edge (since their endpoints are hosted
by the respective machines), then such a tie is broken randomly.
(2) In the second step, all the machines collect their required edges from the respective proxies: since
each edge proxy machine knows the hash function $h$ as well as the deterministic assignment of triplets,
it can send each edge to the machines where it is needed.
Then, each machine simply enumerates all the triangles in its local subgraph.

Our algorithm differs from the one in~\cite{DolevLP12} in the way the $k^{1/3}$
subsets of vertices are constructed, in the use of proxy computation and in the routing of
messages, which in our algorithm is randomized and hence requires a more involved analysis,
allowing for a better time complexity for graphs where the number of edges $m$ is $o(n^2)$.

We now argue that the above algorithm correctly enumerates all the triangles of a graph $G$,
and analyze its round complexity. A key step in the analysis of the complexity is to bound
from above the number of edges assigned to each machine. Observe that the number of edges
between pairs of subsets of one triplet is no larger than the number of edges in the subgraph
of $G$ induced by the nodes of one triplet; in turn, because of the random color-based partition
of the vertices made by the algorithm, the latter quantity is asymptotically equivalent to the
number of edges in the subgraph of $G$ induced by a set of (in this case, $\tilde O(n/k^{1/3})$)
randomly-chosen nodes of a graph. Thus, we shall concentrate on the latter quantity (which is of interest
in its own right). To this end, we will use the following concentration result due to R{\"{o}}dl 
and Ruci\'{n}ski~\cite{RodlR94}.\footnote{Observe
that one cannot simply apply a Chernoff bound, since edges are not chosen independently; also, mimicking the
argument for the proof of Lemma~4.1 in~\cite{KlauckNPR15} would give a bound of the form $\tilde O(m/k^{1/3})$,
which is weaker since we would be overcounting edges (as we would be counting also those edges with just one
endpoint in the given machine).}

\begin{proposition}[{\cite[Proposition~1]{RodlR94}}]\label{pro:machine_edges}
Let, for a graph $G = (V,E)$, $m < \eta  n^2$, and let $R$ be a random subset of $V$ of size $|R| = t$ such that $t \geq 1/3\eta$.
Let $e(G[R])$ denote the number of edges in the subgraph induced by $R$. Then,
\[
\emph{Pr}\left[e(G[R]) > 3 \eta t^2\right] < t \cdot e^{-ct}
\]
for some $c > 0$.\onlyLong{\footnote{A careful
inspection of the argument used by R{\"{o}}dl and Ruci\'{n}ski to establish this result reveals that the additional condition
$t \geq 1/3\eta$, missing from their statement, is necessary for the result to hold. In fact, as stated, their result
is implicitly assuming that both $n$ and $t$ grow to infinity~\cite{personal}.}}
\end{proposition}

We are now ready to analyze the algorithm.
\onlyShort{Lack of space forces us to defer the analysis to the full version of the paper.}

\begin{theorem}
There is a distributed algorithm for the $k$-machine model that enumerates all the triangles of an $n$-node,
$m$-edge graph in $\tilde O(m/k^{5/3} + n/k^{4/3})$ rounds with high probability.
\end{theorem}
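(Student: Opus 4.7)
The plan is to verify correctness and then bound, separately, the running times of (i) the color-hash partition, (ii) the edge-proxy assignment phase, and (iii) the proxy-to-triplet-owner collection phase.

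For correctness, observe that every triangle $\{u,v,w\}$ induces a color triplet $(h(u),h(v),h(w))$ which, by the hard-coded assignment of triplets to machines, corresponds to a unique machine $M$; since $M$ receives every edge whose two endpoints fall into the three color classes of its triplet, the triangle $\{u,v,w\}$ is enumerated locally at $M$. The hash function $h$ is shared, so no communication is needed to agree on the partition, and standard balls-into-bins arguments show that each of the $k^{1/3}$ color classes has size $\tilde O(n/k^{1/3})$ w.h.p.

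The key structural estimate is the size of the subgraph induced by a triplet. Invoking Proposition~\ref{pro:machine_edges} with $\eta = m/n^2$ and $t=3n/k^{1/3}$ yields that, w.h.p., each of the $k$ triplets induces at most $\tilde O(m/k^{2/3})$ edges; a union bound over all $k$ triplets only costs a polylogarithmic factor. This is exactly the ``output size'' each machine has to absorb.

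For the edge-proxy phase, each of the $m$ edges is initially held by the home machine of (at least) one endpoint, so by the RVP each machine starts with $\tilde O(m/k)$ edges. Choosing a uniformly random proxy per edge makes each machine both the source and the destination of $\tilde O(m/k)$ edge messages whose other endpoint is uniform; Lemma~\ref{lem:direct-routing} then delivers all these in $\tilde O(m/k^2)$ rounds. The proxy assignment rule for high-degree vertices (degree $\ge 2k\log n$) is what prevents both endpoints from independently designating proxies and doubling the work; a Chernoff bound shows that the edges incident to any single vertex of degree $d$ get spread across $\min(d,k)$ machines evenly, which keeps the per-machine load at $\tilde O(m/k)$ even in graphs with very skewed degree sequences. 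In the collection phase, each machine requests the $\tilde O(m/k^{2/3})$ edges of its triplet; since every edge sits at a uniformly random proxy, each machine is the destination of $\tilde O(m/k^{2/3})$ messages whose sources are uniform, and Lemma~\ref{lem:direct-routing} delivers them in $\tilde O(m/k^{5/3})$ rounds. The additive $\tilde O(n/k^{4/3})$ term accounts for the unavoidable per-vertex overhead of the triplet: each of the $\tilde O(n/k^{1/3})$ vertices of a triplet may need at least one message to its triplet-owner, and distributing these over the $k{-}1$ outgoing links of each source machine gives the $\tilde O(n/k^{4/3})$ summand that dominates in the sparse regime $m=o(n/k^{1/3})$.

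The main obstacle, as in the congested-clique analysis of \cite{DolevLP12}, is the concentration of the induced triplet edge count: a naive Chernoff bound fails because the edges of $G$ sampled by a triplet are not independent (two edges sharing a vertex are correlated through the hash of that shared vertex). This is precisely where Proposition~\ref{pro:machine_edges} is indispensable; once that bound is in hand, the rest of the argument reduces to verifying that the random-proxy step loads each of the $k-1$ outgoing links of every machine with $\tilde O((m/k^{2/3})/k)=\tilde O(m/k^{5/3})$ messages, which is a routine application of Lemma~\ref{lem:direct-routing} together with a union bound over the $O(1)$ phases of the algorithm.
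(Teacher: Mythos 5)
Your overall architecture (color-hash partition, proxy assignment, proxy-to-triplet routing, with Proposition~\ref{pro:machine_edges} supplying the concentration that a naive Chernoff bound cannot) matches the paper's proof. But there is a genuine gap in your key structural estimate. You invoke Proposition~\ref{pro:machine_edges} with $\eta = m/n^2$ and $t = \Theta(n/k^{1/3})$ unconditionally, concluding that every triplet induces $\tilde O(\eta t^2) = \tilde O(m/k^{2/3})$ edges. The proposition, however, requires $t \ge 1/(3\eta)$, i.e., roughly $m = \Omega(nk^{1/3}/\log n)$, and for sparser graphs this hypothesis fails. This is not a technicality: the claimed bound is simply false for sparse graphs. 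Take a star on $n$ vertices ($m = n-1$); with probability $\Theta(1/k^{1/3})$ a triplet contains the center and then induces $\Theta(n/k^{1/3})$ edges, exceeding your claimed $\tilde O(m/k^{2/3}) = \tilde O(n/k^{2/3})$ by a factor of $k^{1/3}$. The paper therefore splits into two cases: when $m \geq nk^{1/3}/6d\log n$ it applies the proposition with $\eta = 2m/n^2$ and gets $\tilde O(m/k^{2/3})$ edges per triplet, hence $\tilde O(m/k^{5/3})$ routing time; when $m$ is smaller it applies the proposition with $\eta = 1/3t$, which only yields $e(G[R]) = \tilde O(t) = \tilde O(n/k^{1/3})$ edges per triplet and hence $\tilde O(n/k^{4/3})$ routing time via Lemma~\ref{lem:direct-routing}. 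This second case is where the additive $n/k^{4/3}$ term actually comes from; your explanation of that term as a ``per-vertex overhead'' of one message per triplet vertex is not the right mechanism (only edges are ever routed, never bare vertices) and does not follow from anything you establish.

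A smaller issue: your treatment of the proxy-assignment phase is looser than the paper's. The paper partitions the non-isolated vertices into $\log n$ degree classes, shows each machine designates $\tilde O(n_i d_i/k)$ proxies per class (summing to $\tilde O(m/k)$), and separately accounts for the broadcast rounds consumed by designation requests for vertices of degree $\geq 2k\log n$, arguing that $f$ such requests force $m = \tilde\Omega(fk^2)$ so the $\tilde O(f)$ broadcast rounds are subsumed by $\tilde O(m/k^{5/3})$. Your sketch gestures at the right conclusion but omits both the bucketing and this accounting.
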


\onlyLong{
\begin{proof}
Since there are $(k^{1/3})^3 = k$ possible triplets of non-intersecting subsets of $n/k^{1/3}$ nodes,
all possible triangles are examined by the algorithm, and this proves its correctness.

We now argue that the algorithm terminates in $\tilde O(m/k^{5/3} + n/k^{4/3})$ rounds w.h.p.
As part of the argument used to prove Lemma~4.1 of~\cite{KlauckNPR15} it is shown that every
machine initially stores $\tilde O(m/k + \Delta)$ edges, where $\Delta$ is the maximum degree of the graph. 
If we apply Lemma~\ref{lem:direct-routing} directly, the communication phase that assigns the edges to their
random proxies takes $\tilde O(m/k^2 + \Delta/k)$ rounds w.h.p. We now show that the proxy assignment
rule allows us show an $\tilde O(m/k^{5/3})$ bound for this phase for every non-sparse graph.

Clearly, by the random proxy assignment, each machine receives only $\tilde{O}(m/k)$ messages.
We next argue that each machine is responsible for designating only $\tilde{O}(m/k)$ edges w.h.p.
Then, by Lemma~\ref{lem:direct-routing}, the time to send all the designation messages
is $\tilde{O}(m/k^2)$ rounds.

For the sake of the analysis, we partition the non-isolated nodes of the input graph into $\log n$ sets, based on their degree:
the $i$-th set contains all the nodes whose degree is in $[\Delta/2^i, \Delta/2^{i+1})$, $0 \leq i \leq \log n - 1$.
We now focus on the number of messages sent by some machine $M$.  By a standard Chernoff bound, a node $v_i$
with degree $d_i$ in the $i$-th set has $\tilde{O}(d_i/k)$ neighbors in $M$ w.h.p. If $n_i$ is number of nodes
in the $i$-th set, then the total number of neighbors (and hence messages) that $M$ will send with respect
to nodes in this set is $\tilde{O}(n_id_i/k)$ w.h.p. Summing over all the $\log n$ sets we have that
the total number of messages sent by $M$ is $\sum_{i=0}^{\log n - 1} \tilde{O}(n_id_i/k) = \tilde{O}(m/k)$ w.h.p.\
(via the union bound). Applying the union bound over all the machines, we have that the same bound holds for every machine. 

The above argument does not take into account the messages sent by a machine  initially to request
designation of an edge. A machine needs one round (to broadcast to all the other machines) to request such a designation.
If some machine $M$ sends $f \geq k \polylog n$ requests, then 
$M$ must have $f$ nodes with degree at least $2k \log n$.  By the RVP, this implies that with high probability  the total number of
nodes with degree at least $2k \log n$ is at least $\Omega(fk)$. Hence the number of edges in the graph is $m = \tilde{\Omega}(fk^2)$.
Therefore the number of rounds needed for broadcast, $\tilde{O}(f)$, is subsumed
by $\tilde O(m/k^{5/3})$.

Next we analyze the re-routing of each edge $e$ from its edge proxy to all the machines that are assigned a copy of both of
the endpoints of $e$. Observe that any two nodes, and therefore any edge, can be held by at most $k^{1/3}$
different machines: consider an edge $(a,b)$, and pick one machine $M$ that has to receive it because among its
three subsets of nodes, one (call it $A$) contains $a$ and one (call it $B$) contains $b$. Edge $(a,b)$ can be
assigned only to those machines which contain \emph{both} subsets $A$ and $B$, and there are only $k^{1/3}-1$
such machines in addition to $M$. Hence, re-routing the edges entails $m k^{1/3}$ messages to be traveling across
the network.\footnote{Notice that each node is replicated $k^{2/3}$ times in the system, and therefore each edge
is replicated $k^{4/3}$ times; however, we only need to re-route copies of edges that are \emph{internal} to the triplets,
and therefore copies of edges that have one endpoint in one triplet and the other endpoint in a different triplet need
not be communicated. Hence, the total number of edges to be communicated is $m k^{1/3}$ and not $m k^{2/3}$.}
We first bound the number of edges received by each machine. Fix one machine $M$.
\onlyShort{ Let, for a graph $G = (V,E)$, $m < \eta  n^2$, and let $R$ be a random subset of $V$ of size $|R| = t$ such that $t \geq 1/3\eta$, and suppose that
$e(G[R])$ denotes the number of edges in the subgraph induced by $R$. Then, it was shown in~\cite{RodlR94} that
$
\emph{Pr}\left[e(G[R]) > 3 \eta t^2\right] < t \cdot e^{-ct}
$
for some $c > 0$.
}
\onlyLong{We shall apply Proposition~\ref{pro:machine_edges}}\onlyShort{ We apply this result} with $t = d n \log n/k^{1/3}$ for some positive
constant $d$. We have two cases. If $m \geq  n k^{1/3}/6d\log n$ then $m \geq n^2/6t$, which in turn implies $2m/n^2 \geq 1/3t$,
and thus we can apply Proposition~\ref{pro:machine_edges} with $\eta = 2m/n^2$ obtaining, for machine $M$,
\onlyLong{\[}
  \onlyShort{$}
\Prob{e(G[R]) > 3 \frac{2m}{n^2} \mleft(\frac{dn \log n}{k^{1/3}}\mright)^2} < t e^{-c d n \log n /k^{1/3}},
\onlyShort{$}
\onlyLong{\]}
that is, since $k \leq n$,
\onlyLong{\[}
\onlyShort{$}
\Prob{e(G[R]) \leq \frac{6 d^2 m\log^2 n}{k^{2/3}}} > 1 - e^{-\Omega\mleft(\log n\mright)}.
\onlyShort{$}
\onlyLong{\]}
Hence we can apply Lemma~\ref{lem:direct-routing} with $x = \tilde O(m/k^{2/3})$, which yields a round
complexity of $\tilde O(m/k^{5/3})$ w.h.p.
Now observe that each proxy has to send $\tilde O(m/k^{2/3})$ edges. We can apply Lemma~\ref{lem:direct-routing}
with $x = \tilde O(m/k^{2/3})$, which implies that the number of rounds needed for the proxies to send their edges is
$\tilde O(m/k^{5/3})$ w.h.p., completing the analysis for the case $m \geq  n k^{1/3}/6d\log n$.

On the other hand, if $m <  n k^{1/3}/6d\log n$ we shall apply Proposition~\ref{pro:machine_edges} with
$\eta = 1/3t = k^{1/3}/3dn\log n$, obtaining
\[
\Prob{e(G[R]) > 3 \frac{k^{1/3}}{3dn\log n} \mleft(\frac{dn \log n}{k^{1/3}}\mright)^2} < t e^{-c d n \log n /k^{1/3}},
\]
that is, since $k \leq n$,
\[
\Prob{e(G[R]) \leq \frac{dn\log n}{k^{1/3}}} > 1 - e^{-\Omega\mleft(\log n\mright)}.
\]
As in the previous case, we apply Lemma~\ref{lem:direct-routing}, now with $x = \tilde O(n/k^{1/3})$. The theorem follows.
\end{proof}
}

\section{Conclusions}

We presented a general technique for proving lower bounds on the round complexity of distributed computations in a general message-passing model for large-scale computation, and showed its application for two prominent graph problems, \pagerank and triangle enumeration. We also presented near-optimal algorithms
for these problems, which can be efficiently implemented in practice.

Our lower bound technique works by relating the size of the output to the number of communication rounds needed, and could be useful in showing lower bounds for other problems where the output size is large (significantly more than the number of machines), such as sorting, matrix multiplication, shortest paths, matching, clustering, and densest subgraph.

\bibliographystyle{abbrv}
\bibliography{biblio}

\end{document}